\title{Parameterized Distributed Complexity Theory:  \protect\\ A logical approach}
\titlerunning{Parameterized Distributed Complexity Theory}
\author{Sebastian Siebertz}{University of Bremen, Germany}{siebertz@uni-bremen.de}{https://orcid.org/0000-0002-6347-1198}{}
\author{Alexandre Vigny}{University of Bremen, Germany}{vigny@uni-bremen.de}{https://orcid.org/0000-0002-4298-8876}{}
\authorrunning{S.\ Siebertz, A.\ Vigny}
\keywords{Local, Congest,Congested-Clique, Distributed
Computing, Distributed Parameterized Complexity}
\newcommand{\class}[1]{\mathcal #1}
\newcommand{\Ff}{\mathcal{F}}
\definecolor{blue}{rgb}{0.2,0.2,0.9}
\definecolor{brown}{rgb}{0.6,0.6,0.2}
\newcommand{\Oof}{\mathcal{O}}
\newcommand{\CCC}{\class{C}}
\newcommand{\PPP}{\textsc{P}}
\newcommand{\GGG}{\class{G}}
\newcommand{\wh}[1]{\widehat{#1}}
\newcommand{\coloneqq}{:=}
\newcommand{\N}{\mathbb{N}}
\renewcommand{\phi}{\varphi}
\renewcommand{\epsilon}{\varepsilon}
\newcommand{\FO}{\mathrm{FO}}
\newcommand{\dist}{\mathrm{dist}}
\renewcommand{\a}{\overline{a}}
\renewcommand{\b}{\overline{b}}
\newcommand{\ov}{\overline{v}}
\newcommand{\x}{\overline{x}}
\newcommand{\y}{\overline{y}}
\newcommand{\z}{\overline{z}}
\newcommand{\wLambda}{\widehat{\Lambda}}
\newcommand{\wphi}{\widehat{\varphi}}
\newcommand{\parent}{\mathsf{parent}}
\renewcommand{\leq}{\leqslant}
\renewcommand{\geq}{\geqslant}
\renewcommand{\le}{\leqslant}
\newcommand{\norm}[1]{\left\lVert#1\right\rVert}
\newcommand{\Distributed}{\ensuremath{\mathsf{DIS\-TRI\-BU\-TED}}\xspace}
\newcommand{\Local}{\ensuremath{\mathsf{LOCAL}}\xspace}
\newcommand{\Congest}{\ensuremath{\mathsf{CONGEST}}\xspace}
\newcommand{\CCongest}{\ensuremath{\mathsf{CONGESTED\text{-}CLIQUE}}\xspace}
\newcommand{\LFPT}{\ensuremath{\mathsf{LOCAL}}\text{-}\ensuremath{\mathsf{FPT}}\xspace}
\newcommand{\LocalW}[1]{\ensuremath{\mathsf{LOCAL}}\text{-}\ensuremath{\mathsf{W}}[#1]\xspace}
\newcommand{\LocalA}[1]{\ensuremath{\mathsf{LOCAL}}\text{-}\ensuremath{\mathsf{A}}[#1]\xspace}
\newcommand{\LocalAW}{\ensuremath{\mathsf{LOCAL}}\text{-}\ensuremath{\mathsf{AW}[\star]}\xspace}
\newcommand{\FPT}{\ensuremath{\mathsf{FPT}}\xspace}
\newcommand{\Weft}{\ensuremath{\mathsf{WEFT}}\xspace}
\newcommand{\Pp}{\mathcal{P}}
\begin{document}
\maketitle


\begin{abstract}
\noindent Parameterized complexity theory offers a framework
for a refined analysis of hard algorithmic problems. Instead of
expressing the running time of an algorithm as a function
of the input size only, running times are expressed with respect
to one or more parameters of the input instances.
In this work we follow the approach of
parameterized complexity to provide a framework
of \emph{parameterized distributed complexity}.
The central notion of efficiency in parameterized
complexity is fixed-parameter tractability and we define the
distributed analogue $\Distributed$-$\textsf{FPT}$ (for
$\Distributed\in \{\Local, \Congest, \CCongest\}$) as the class of
problems that can be solved in $f(k)$ communication
rounds in the \Distributed model of distributed computing,
\linebreak where~$k$ is the parameter of the problem instance and $f$ is an
arbitrary computable function. To classify hardness we
introduce three hierarchies. The
$\Distributed$-\textsf{WEFT}-hierarchy
is defined analogously to the
\textsf{W}-hierarchy in parameterized
complexity theory via reductions to the weighted circuit
satisfiability problem, but it turns out that this definition
does not lead to
satisfying frameworks for the \Local and \Congest models.
We then follow a logical approach that leads to a more
robust theory. We define the levels of the
$\Distributed$-\textsf{W}-hierarchy and the
$\Distributed$-\textsf{A}-hierarchy that have first-order
model-checking problems as their complete problems via
suitable reductions.
\end{abstract}

\section{Introduction}

The \emph{synchronous message passing model}, which can be traced back
at least to the seminal paper of Gallager, Humblet and
Spira~\cite{gallager1983distributed},
is a theoretical model of distributed systems that allows to focus on
certain important aspects of distributed computing. In this model, a
distributed system is modeled by an undirected (connected) graph~$G$,
in which each vertex $v\in V(G)$ represents a computational entity of
the network, often referred to as a node of the network, and each edge
$\{u,v\}\in E(G)$ represents a bidirectional communication channel
that connects the two nodes $u$ and~$v$.  The nodes are equipped with
unique numerical identifiers (of size $\Oof(\log n)$, where $n$ is the
order of the network graph). In a distributed algorithm, initially,
the nodes have no knowledge about the network graph and only know
their own and their neighbors' identifiers.  The nodes then
communicate and coordinate their actions by passing messages to one
another in order to achieve a common goal.

The synchronous message
passing model without any bandwidth restrictions is called the
\Local~model of distributed computing~\cite{peleg2000distributed}.  If
every node is restricted to send messages of size at most~$\Oof(\log n)$ one obtains the \Congest~model, and finally, if
messages of size~$\Oof(\log n)$ can be sent to all nodes of the
network graph (not only to the neighbors of a node) we speak of the
\CCongest~model.  The time complexity of a distributed algorithm in
each of these models is defined as the number of communication rounds
until all nodes terminate their computations.


Typically considered computational tasks are related to graphs, in
fact, often the graph that describes the network topology is the graph
of the problem instance itself.~For example, in a distributed
algorithm for the \textsc{Dominating Set} problem,
the~\mbox{computational} task is to compute a small dominating set of
the network graph $G$. Each node of the network must decide and report
whether it shall belong to the dominating set or not.


Research in the distributed computing community is to a large extent
problem-driven. There is a huge body of literature on upper and lower
bounds for concrete problems.  We refer to the surveys of
Suomela~\cite{suomela2013survey} and Elkin~\cite{elkin2004distributed}
for extensive overviews of distributed algorithms. There has also been
major progress in developing a systematic distributed complexity
theory, including definitions of suitable locality preserving
reductions and distributed complexity classes. We refer
to~\cite{BalliuBOS18,BalliuHKLOS18,ChangP17,FeuilloleyF16,
  fraigniaud2013towards, KorhonenS18} for extensive background.

A very successful approach to deal with computationally hard problems
is the approach of parameterized
complexity. Instead of measuring the running time of an algorithm with
respect to the input size only, this approach takes into
account one or more additional parameters. In many practical applications it is reasonable to assume
that structural parameters of the input instances are bounded, another
commonly considered parameter is the size of the solution. In case a
parameter is bounded, one can design special algorithms that aim to
restrict the non-polynomial dependence of the running time to this
parameter. For example, the currently fastest known exact algorithm
for the \textsc{Dominating Set} problem on $n$-vertex graphs runs in time
$\Oof(1.4969^n)$~\cite{van2011exact}. If, however, we are dealing with
structured graphs, e.g.\ if we may assume that a graph~$G$ excludes a
complete bipartite subgraph $K_{t,t}$, we can decide in time
$2^{\Oof(t^2k\log k)}\cdot \norm{G}$ whether~$G$ contains a dominating
set of size at most $k$~\cite{fabianski2018progressive}. When~$k$ and
$t$ are small and $G$ is large, this may be a major improvement over
the exact algorithm.  If a problem admits such running times, we speak
of a fixed-parameter tractable problem. More precisely, a
parameterized problem is \emph{fixed-parameter tractable} if there is
an algorithm solving it in time $f(k)\cdot n^c$, where~$k$ is the
parameter, $f$ is a computable function, $n$ is the input size and $c$
is a constant.

\smallskip In this work we follow the approach of parameterized
complexity to provide a framework of \emph{parameterized distributed
  complexity}.
For any \Distributed model, where $\Distributed \in \{\Local,
\Congest, \mathsf{CONGESTED}$-$\mathsf{CLIQUE}\}$, we
define the distributed complexity class $\Distributed$-$\textsf{FPT}$ as the class of problems that can be
solved in $f(k)$ communication rounds in the \Distributed model, where
$k$ is the parameter of the problem instance and~$f$ is an arbitrary
computable function.  These classes are the distributed analogues of
the central notion of fixed-parameter tractability.



\smallskip Parameterized approaches to distributed computing were
recently studied in~\cite{KorhonenR17}, where it was shown that
$k$-paths and trees on $k$ nodes can be detected in $\Oof(k\cdot
2^k)$ rounds in the \textsf{BROADCAST}-\textsf{CONGEST} model. Similar
randomized algorithms were obtained in the context of distributed
property testing~\cite{Censor-HillelFS16, EvenFFGLMMOORT17}. The
setting which is closest to our present work is the work of Ben-Basat
et al.~\cite{ben2018parameterized}.  The authors studied the
parameterized distributed complexity of several fundamental graph
problems, parameterized by solution size, such as the \textsc{Vertex
  Cover} problem, the \textsc{Independent Set} problem, the
\textsc{Dominating Set} problem, the \textsc{Matching} problem, and
several more. In each of these problems the question is to decide
whether there exists a solution of size $k$, where $k$ is the input
parameter. They showed that all of the above problems are
fixed-parameter tractable in the \Local~model -- in our notation: they
belong to the class $\Local$-$\textsf{FPT}$.

This is no surprise e.g.\ for the \textsc{Dominating Set} problem: if
a dominating set of a connected graph $G$ has size at most~$k$,
then the diameter of $G$ is bounded by~$3k$. Hence, in the
\Local~model one can either learn in~$3k$ rounds the whole graph
topology and determine by brute force whether a dominating set of
size $k$ exists. Otherwise, if the diameter is too large, the
algorithm can simply reject the instance as a negative
instance. Similarly,  an independent set of size $k$ can be chosen
greedily if the diameter of $G$ is sufficiently large and the problem
can be solved by brute force otherwise. The authors
of~\cite{ben2018parameterized} formalized this phenomenon by
defining the class $\textsf{DLB}$ of problems whose optimal solution
size is lower bounded by the graph diameter.
The situation is more complex in the \Congest~model. For this model,
the authors study two problems, namely the \textsc{Vertex Cover}
problem and the \textsc{Matching} problem, and prove that both
problems admit fixed-parameter distributed algorithms in the
\Congest~model -- in our notation: they belong to the class
\Congest-$\textsf{FPT}$.

\smallskip In parameterized complexity theory, the \textsc{Vertex
  Cover} problem is a standard example of a fixed-parameter tractable
problem, while the \textsc{Independent Set} problem and
\textsc{Dominating Set} problem are beliebved to be intractable. While
lacking the techniques to actually prove this intractability,
parameterized complexity theory offers a way to establish
intractability by classifying problems into complexity classes by
means of suitable reductions. The \textsf{W}-hierarchy is a collection
of complexity classes that may be seen as a parameterized refinement
of the classical complexity class \textsf{NP}.  The
\textsc{Independent Set} problem is the foremost example of a problem
that is hard for the parameterized complexity class
$\textsf{W}[1]$. Similarly, the \textsc{Dominating Set} problem is a
prime example of a $\textsf{W}[2]$-hard problem.  The
\textsf{A}-hierarchy is a collection of complexity classes that may be
seen as a parameterized analogue of the polynomial hierarchy.

\smallskip As the \textsc{Independent Set} problem and the
\textsc{Dominating Set} problem are in \LFPT, these problems cannot
take the exemplary role of hard problems that they take in classical
parameterized complexity theory.  However, their colored variants
remain hard also in the distributed setting.  For example, in the
\textsc{Multicolored Independent Set} problem one searches in a
colored graph for an independent set where all vertices of the set
have different colors.
%
%
As the colors can be given to vertices in an arbitrary way, the
problem looses its local character and becomes hard also in the \Local
model.  In the \Congest model this hardness is already observed for
the uncolored \textsc{Independent Set}
problem~\cite{ben2018parameterized}. The authors of~\cite{ben2018parameterized} establish a lower
bound of $\Omega(n^2/\log^2 n)$ on the number of rounds in the
\Congest model, where $n$ can be arbitrarily larger than $k$.

\smallskip In classical parameterized complexity theory, the
$\mathsf{W}$-hierar\-chy is defined by the complexity of circuits that
is required to check a solution. This hierarchy was introduced by
Downey and Fellows in~\cite{downey1995fixed}.  At a first glance it
seems natural to model the circuit evaluation problem as a graph
problem and consider it as such in the distributed setting. This leads
to the definition of a problem class $\textsc{Weft}[t]$ for each
$t\geq 1$ and we define the class $\Distributed$-$\mathsf{WEFT}[t]$ as
the class of those problems that reduce via parameterized \Distributed
reductions to a member of $\textsc{Weft}[t]$.  The problem with this
is that in the definition of the $\mathsf{W}$-hierarchy one considers
circuit families of bounded depth. In our locality sensitive setting
this does not lead to a robust complexity theory: for example, we
obtain that $\Local$-$\mathsf{WEFT}[t]\subseteq \LFPT$ for each~$t$,
while the \textsc{Multicolored Independent Set} problem, which we
would like to place into the class \Local-$\mathsf{WEFT}[1]$, does not
lie in any of the classes $\Local$-$\mathsf{WEFT}[t]$.  These problems
do not arise in the \CCongest~model and the \CCongest-\textsf{WEFT}
hierarchy is an interesting hierarchy to study.  We refer to
\Cref{sec:weft} for the details.

\smallskip
The \textsf{A}-hierarchy was introduced by Flum and
Grohe, originally in terms of the parameterized halting problem for
alternating Turing machines~\cite{flum2001fixed}. This definition
cannot be easily adapted to the distributed setting.  Instead, we
follow another approach of~\cite{flum2001fixed} (see also the
monograph~\cite{flum2006parameterized}), where problems are classified
by their descriptive complexity.  More precisely, the authors classify
problems by the syntactic form of their definitions in first-order
predicate logic. This leads in a very natural way to the definition of
the levels of the $\mathsf{W}$- and $\mathsf{A}$-hierarchy.  We denote
by $\Sigma_0$ and $\Pi_0$ the class of quantifier-free formulas. For
$t\geq 0$, we let~$\Sigma_{t+1}$ be the class of all formulas
$\exists x_1\ldots\exists x_\ell\,\phi$, where $\phi\in \Pi_t$, and we
let $\Pi_{t+1}$ be the class of all formulas
$\forall x_1\ldots\forall x_\ell\,\phi,$ where $\phi\in \Sigma_t$.
Furthermore, for $t\geq 1$, $\Sigma_{t,1}$ denotes the class of all
formulas of~$\Sigma_t$ such that all quantifier blocks after the
leading existential block have length at most~$1$.  The model-checking
problem for a class~$\Phi$ of formulas, denoted $\textsc{MC}$-$\Phi$,
is the problem to decide for a given (vertex and edge colored) graph
$G$ and formula $\phi\in \Phi$ whether~$\phi$ is satisfied on $G$.
For all $t\geq 1$, $\textsc{MC}$-$\Sigma_{t,1}$ is complete
for~$\textsf{W}[t]$ under fpt-reductions, and for all $t\geq 1$,
$\textsc{MC}$-$\Sigma_t$ is complete for $\textsf{A}[t]$ under
fpt-reductions~\cite{flum2006parameterized}. After giving an
appropriate notion of parameterized \Distributed reductions, we define
for all $t\geq 1$ the class \Distributed-$\textsf{W}[t]$ as the class
$[\textsc{MC}$-$\Sigma_{t,1}]^{\Distributed}$ of problems that reduce
to the~$\Sigma_{t,1}$ model-checking problem via parameterized
\Distributed reductions. Analogously, we define for all $t\geq 1$ the
class \Distributed-$\textsf{A}[t]$ as the class
$[\textsc{MC}$-$\Sigma_{t}]^{\Distributed}$ of problems that reduce to
the $\Sigma_{t}$ model-checking problem via parameterized \Distributed
reductions. The details are presented in \Cref{sec:hierarchy}.


Let us comment on our choice to use the model-checking problem for
first-order logic as the basis of our distributed complexity
theory. In principle, one could take any problem and define a
complexity class from its closure under appropriate reductions. The
model-checking problem for fragments of first-order logic is a very
natural candidate to use for the definition of complexity classes. The
number of quantifiers and of quantifier alternations in a formula
needed to describe a problem give an intuitive indication about the
complexity of the problem, which naturally leads to a hierarchy of
complexity classes.
First-order logic can express many important graph problems in an
elegant way, e.g.\ the existence of a multicolored independent set of
size~$k$ can be expressed by the formula
$\exists x_1\ldots \exists x_k (\bigwedge_{1\leq i\leq
  k}P_i(x_i)\wedge \bigwedge_{1\leq i\neq j\leq k} \neg
E(x_i,x_j)\big)$.
Here, the $P_i$ are unary predicates that encode the colors of
vertices and $E$ is a binary predicate that encodes the edge relation
(here we assume for simplicity that the graph is colored only with the
colors $P_1,\ldots, P_k$).  The above formula is a
$\Sigma_{1,1}$-formula, hence the \textsc{Multicolored Independent
  Set} problem is placed in the class $\Distributed$-$\textsf{W}[1]$,
as intended.  Similarly, the existence of a dominating set of size at
most $k$ can be expressed by the formula
$\exists x_1\ldots\exists x_k \forall y \big(\bigvee_{1\leq i\leq k}
(y=x_i \vee E(y,x_i))\big)$.
This is a $\Sigma_{2,1}$-formula, which places the \textsc{Dominating
  Set} problem in the class $\Distributed$-$\textsf{W}[2]$.  In
particular, we have the desired inclusions
$\Distributed$-$\textsf{FPT}\subseteq \Distributed$-$\mathsf{W}[1]$
and
$\Distributed$-$\mathsf{WEFT}[t]\subseteq
\Distributed$-$\textsf{W}[t]$
for all $t\geq 1$. The details can be found in \Cref{sec:hierarchy}.

We then use a classical theorem from model theory, namely Gaifman's
Theorem, to prove that the \Local-\textsf{W}- and
\Local-\textsf{A}-hierarchy collapse to the second level of the
\Local-\textsf{W}-hierarchy.
On the other
hand we conjecture that the
\Congest-\textsf{W}- and~-\textsf{A}- and the \CCongest-\textsf{W}-
and -\textsf{A}-hierarchies are strict.

Since $\Sigma_{1,1}=\Sigma_1$, we have
\Local-\textsf{W}$[1]=$ \Local-\textsf{A}$[1]$ (just as
in classical parameterized complexity theory).
In the light
of the above collapse result it remains (at least for first-order
definable problems) to determine whether they belong to
\Local-\textsf{W}$[1]$. We
prove that the $\textsc{Multicolored Independent
Set}$ problem and the $\textsc{Induced Subgraph Isomorphism}$ problem
are complete for \Local-\textsf{W}$[1]$ under \Local reductions.
In classical parameterized complexity theory these problems
are prime examples of \textsf{W}$[1]$-complete problems,
however, the standard reductions do not translate to \Local
reductions, and we have to come up with new reductions. We prove that
$\LocalW{1} \subsetneq \LocalW{2}$ by showing that the 
\textsc{Clique Domination} problem is not in $\LocalW{1}$.
The details can be found in
\cref{seq:first-levels}.



 We then turn our attention to \emph{distributed
   kernelization}.  Kernelization is a classical approach in
 parameterized complexity theory to reduce the size of the input
 instance in a polynomial time preprocessing step. More formally, a
 kernelization for a parameterized problem $\PPP$ is an algorithm that
 computes for a given instance $(G,k)$ of $\PPP$ in time polynomial in
 $(|G|+k)$ an instance $(G',k')$ of $\PPP$ such that $(G,k)$ is a
 positive instance of~$\PPP$ if and only if $(G',k')$ is a positive
 instance of $\PPP$ and such that $(|G'|+k')$ is bounded by a
 computable function in $k$.  The output $(G',k')$ is called a
 kernel. It is a classical result of parameterized complexity that a
 problem is fixed-parameter tractable if and only if it admits a
 kernel.  We give two definitions of distributed kernelization
 and study their relationship to fixed-parameter tractability.
 %
 %
 The details are given in \Cref{sec:kernel}.

 \smallskip

 Finally, we define the class $\Distributed$-$\mathsf{XPL}$ as the
 class of problems that can be solved in $f(k)\cdot (\log n)^{g(k)}$
 rounds (for computable functions $f$ and $g$) in the \Distributed
 model. $\mathsf{XPL}$ stands for \emph{slicewise poly-logarithmic}. In
 parameterized complexity theory the class~$\mathsf{XP}$ of slicewise
 polynomial problems contains all problems that can be solved in
 time~$n^{g(k)}$ for some computable function $g$. This definition
 obviously has to be adapted to make sense in the distributed setting,
 as every problem can be solved in a polynomial number of rounds
 (polynomial in the graph size) in the \Congest~model. As the final
 result we show that the model-checking problem of first-order logic is
 in \CCongest-\textsf{XPL} when parameterized by formula length on
 classes of graphs of bounded expansion. We conjecture that this is not
 the case on all graphs.  The details are presented in
 \Cref{sec:mc-be}.

\medskip
\section[Distributed-FPT and reductions]{Distributed fixed-parameter tractability and
reductions}\label{sec:fpt-red}

\medskip
We consider the \emph{synchronous message passing model},
in which
a distributed system is modeled by an undirected connected
graph~$G$. Each vertex $v\in V(G)$ represents a computational
entity of the network, often referred to as a node of the network,
and each edge $\{u,v\}\in E(G)$ represents a bidirectional
communication channel that connects the two nodes~$u$ and~$v$.
The nodes are equipped with unique
numerical identifiers (of size $\Oof(\log n)$, where $n$
is the order of the network graph). In a distributed algorithm,
initially, the nodes have no knowledge
about the network graph and only know their own and their
neighbors identifiers.
The nodes communicate and coordinate their actions by
passing messages to one another in order to achieve a common goal.
The synchronous message
passing model without any bandwidth restrictions is called the
\Local~model of distributed computing~\cite{peleg2000distributed}.
If every node is
restricted to send messages of size at most $\Oof(\log n)$
one obtains the \Congest~model, and finally, if messages
of size $\Oof(\log n)$ can be sent to all nodes of the
network graph (not only to neighbors) we speak of the \CCongest~model.
The time complexity of a distributed
algorithm in each of these models is defined as the number
of communication rounds until all nodes terminate their
computations. We refer to the surveys~\cite{elkin2004distributed,
FeuilloleyF16, suomela2013survey} for
extensive overviews of distributed algorithms.

Typically considered computational tasks
are related to graphs, in fact, often the graph that
describes the network topology is the graph of the problem
instance itself. We therefore focus on graph
problems, and, as usual in complexity theory and also
parameterized complexity theory, on decision problems.
We allow a fixed number
of vertex and edge labels/colors that are accessible via unary
and binary predicates $P_1,\ldots, P_s\subseteq V(G)$ and
$E_1,\ldots, E_t\subseteq V(G)^2$, for fixed $s,t\in \N$.
We write $\GGG_{s,t}$
for the set of all
finite connected graphs with $s$ unary and~$t$ binary
predicates. In the following, an instance
of a decision problem is a pair $(G,k)$, where~$G$ is a connected,
vertex and edge colored
graph, and $k\in \N$ is a parameter.
We refer to the textbooks \cite{CyganFKLMPPS15, DowneyF13,flum2006parameterized}
for extensive background on parameterized complexity theory.

\begin{definition}
A parameterized decision problem is a set $\PPP\subseteq
\GGG_{s,t}\times\N$ for some $s,t\in \N$.
\end{definition}

Often in the literature, see e.g.~\cite{fraigniaud2013towards},
in a distributed algorithm for a decision problem, each
processor must produce a Boolean output \emph{accept}
or \emph{reject} and the
decision is defined as the conjunction of all outputs.
As a matter of taste, we instead define the decision of an algorithm
as the \emph{disjunction} of all outputs. With this definition,
for example the decision problem whether a graph contains
a vertex of a fixed degree~$d$ becomes a local problem.


\begin{definition}
In a distributed algorithm for a parameterized decision problem
$\PPP$, each node has access to the parameter $k$
of the instance and must at termination produce
an output \emph{accept} or \emph{reject}. The decision
of the algorithm is defined as the disjunction of
the outputs of all nodes, i.e., if the instance belongs
to $\PPP$, then some processor must accept
and otherwise, all processors must
reject.
\end{definition}

We come to the central notion of distributed
fixed-parame\-ter tractability. In the following let
$\textsf{DISTRIBUTED}$ be any of \Local, \Congest, or
\CCongest.

\begin{definition}
A parameterized decision problem $\PPP$
belongs to $\mathsf{DISTRIBUTED}$-$\mathsf{FPT}$ if there exists
a computable function $f$ and a $\mathsf{DISTRIBUTED}$ algorithm
that on input $(G,k)$ correctly decides in time $f(k)$ whether $(G,k)\in \PPP$.
\end{definition}

\smallskip
We remark that the nodes do not have to know the function
$f$, however, the algorithm must guarantee that all nodes terminate
after $f(k)$ steps. It is immediate from the definitions that
\Congest-\textsf{FPT} $\subseteq$ \Local-\textsf{FPT} and
\Congest-\textsf{FPT} $\subseteq$ \CCongest-\textsf{FPT}.

\begin{example}
\textsc{Independent Set} $\in$ \Local-\textsf{FPT} and
\textsc{Dominating Set} $\in$ \Local-\textsf{FPT}.
\end{example}
\begin{proof}
This was proved in~\cite{ben2018parameterized} and is a simple
consequence of the fact that the size of a maximum independent
set (and minimum dominating set) is functionally lower bounded by the
graph diameter.
\end{proof}

The input to
the \textsc{Multicolored Independent Set} problem is an
integer $k$ and a graph
$G\in \GGG_{k,1}$ for some $k\geq 1$ (recall that $\GGG_{s,t}$ denotes the set of
finite connected graphs with $s$ unary and $t$ binary predicates).
If a node $v$ that is labeled by a predicate $P_i$
is said to have color $P_i$. The algorithmic question is to decide
whether there exist $k$ elements with pairwise different colors
that form an independent set.

\begin{lemma}\label{lem:mis-not-fpt}
\textsc{Multicolored Independent Set} is not in $\Local$-$\mathsf{FPT}$.
\end{lemma}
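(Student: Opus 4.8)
The plan is to prove that Multicolored Independent Set is not in $\Local$-$\mathsf{FPT}$ by exhibiting, for every candidate round bound $f(k)$, two instances --- one positive and one negative --- that are indistinguishable to every node within $f(k)$ communication rounds. The key feature that makes this problem hard in the \Local model, as the introduction already observes, is that the colors are assigned arbitrarily and thus carry global information that cannot be recovered locally. So I will engineer a single graph topology on which two different colorings, one admitting a multicolored independent set and one not, produce identical $f(k)$-round views at every node.

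Concretely, I would fix the parameter, say $k=2$, so that a multicolored independent set is just a pair of non-adjacent vertices of colors $P_1$ and $P_2$. Suppose toward a contradiction that some \Local algorithm decides the problem in $f(2)=:r$ rounds. Take a long cycle $C_n$ with $n$ much larger than $r$ (so that $r$-balls are paths and cannot see ``around'' the cycle). In the positive instance I place one vertex of color $P_1$ and one vertex of color $P_2$ at distance greater than $2r+1$ along the cycle, leaving all other vertices uncolored; since the two colored vertices are non-adjacent, this is a yes-instance. In the negative instance I again place a single $P_1$-vertex and a single $P_2$-vertex, but now adjacent to each other, and far from the identifier-patterns used elsewhere; this is a no-instance because the only $P_1$-vertex and the only $P_2$-vertex are adjacent, so no multicolored independent set of size $2$ exists. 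The first difficulty is that a \Local algorithm does know the entire $r$-ball including identifiers, so I must choose the identifier assignments on the two cycles to be locally isomorphic; I would use the standard trick of relabeling identifiers so that the $r$-neighborhood of the node that must decide looks the same in both instances.

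The main obstacle, and the crux of the argument, is to ensure genuine $r$-round indistinguishability \emph{at every node simultaneously}, not just at one distinguished node: because the decision is the disjunction over all nodes, I must guarantee that \emph{no} node accepts on the negative instance while \emph{some} node accepts on the positive instance, yet every node has the same $r$-view in both. The clean way to handle this is to make the two instances agree on an $r$-neighborhood of every vertex by localizing the difference: in both instances every individual node, within distance $r$, sees at most one colored vertex (or the relevant colored configuration) in an identical local picture. Since in both the positive and negative instances the colored vertices are isolated occurrences that look the same up to the $r$-horizon --- a lone $P_1$ here, a lone $P_2$ there, with matching identifiers --- every node produces the same output in both instances, so the disjunctions coincide. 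This contradicts correctness, because one instance is positive and the other negative. I would carry this out by an explicit identifier-preserving local isomorphism between $r$-balls, which is the technical heart of the proof and the step most likely to need careful bookkeeping.
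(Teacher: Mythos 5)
Your high-level strategy (an indistinguishability argument against a hypothetical $f(k)$-round algorithm) is the same as the paper's, but the concrete construction has a gap that the paper's proof avoids. Your negative instance places the unique $P_1$-vertex adjacent to the unique $P_2$-vertex, and you then claim that every node has the same $r$-round view in the positive and negative instances. That cannot work: in the negative instance the $P_1$-vertex sees an adjacent $P_2$-vertex inside its $1$-ball, whereas in the positive instance no node sees both colors at all (they are at distance $>2r$), so the views at the colored vertices necessarily differ. Indeed, if literally every node (with its ID) had the same view in both instances, the two colorings would agree on every vertex and the instances would be identical; so ``indistinguishable at every node simultaneously'' is the wrong target, and the ``careful bookkeeping'' you defer is precisely the step that fails. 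Relabeling identifiers does not rescue this, since the colored vertices themselves carry the distinguishing information.

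The fix --- and what the paper does --- is to exploit that acceptance is a \emph{disjunction}: you only need to preserve the view of one node that accepts the positive instance, not of all nodes. Take the positive instance with the colored vertices pairwise at distance greater than $2r$ (the paper uses a path with $k=3$; your cycle with $k=2$ works equally well). Some node $v$ accepts; since the colored vertices are far apart, $v$ fails to see at least one of them, say the $P_2$-vertex. Obtain the negative instance by simply \emph{deleting} that color (so no $P_2$-vertex exists anywhere), keeping the graph and all IDs unchanged. The instance is now trivially negative, the view of $v$ is untouched, so $v$ still accepts --- contradiction. Deleting a far-away color, rather than relocating it next to the other color, is the missing idea; with it your argument goes through essentially as in the paper.
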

\begin{proof}
Assume that there exists a $\Local$-$\mathsf{FPT}$ algorithm $\mathcal{A}$ that
works in $c\coloneqq f(3)$ rounds on each instance $G\in \mathcal{G}_{3,1}$. Consider the graph $G_n\in \mathcal{G}_{3,1}$ for $n>3c$
with vertex set $\{v_1,\ldots, v_n\}$
and edge set $\{\{v_i,v_{i+1}\}~:~1\leq i\leq n-1\}$, where
$v_1$ is colored with $P_1$,
$v_n$~is colored with $P_2$ and~$v_{n/2}$ is
colored with $P_3$. By definition,  some node $v_i$ must
accept the input. By symmetry we may assume that $1\leq i\leq n/2$.

Now consider the graph $G_n'$, which is a copy of $G_n$,
where every vertex gets the same node
identifier but only $v_1$ gets color $P_1$ and $v_{n/2}$ gets
color $P_3$, and $v_n$ does not get color~$P_2$. Then $G_n'$ is a
negative instance, however, the $c$-neighborhood of $v_i$ in $G_n'$
is equal to the $c$-neighborhood in $G_n$. Hence, $v_i$ also accepts
the instance $G_n'$, a contradiction.
\end{proof}

We come to the definition of \Distributed 
reductions.

\begin{definition}\label{def:reduction}
A \Distributed reduction is a $\mathsf{DISTRIBUTED}$ algorithm that
turns an instance $(G,k)$ of a parameterized problem
into an instance $(G',k')$, where $(G',k')$ is en\-co\-ded in
$(G,k)$ as follows.
There is a mapping \mbox{$\nu\colon V(G')\rightarrow
V(G)$} and a mapping $\eta\colon E(G')\rightarrow \mathcal{P}(G)$,
where~$\mathcal{P}(G)$ denotes all paths in $G$, with the property that
if $e=\{x,y\}\in E(G')$, then~$\eta(e)$ is a path between $\nu(x)$
and $\nu(y)$ in $G$. The mappings
are stored in vertices of $G$, more precisely, each vertex $v\in V(G)$
stores all $x\in V(G')$ such that
$\nu(x)=v$ and all paths $\eta(\{x,y\})$ such that $\{x,y\}\in E(G')$.
The \emph{radius}
of the reduction is the length of the longest path in the
image of $\eta$ and its \emph{congestion}
is the largest number of paths in the image of $\eta$
that a single edge of~$G$ belongs
to.

For computable functions $s, r,c,t,p$ we say that the reduction
is $(s,r,c,t,p)$-\emph{bounded} if the order of $G'$ is bounded
by~$|G|^{s(k)}$, the radius of the reduction
is bounded by $r(k)$, its congestion is bounded by $c(k)$,
the reduction is computable in $t(k)$ rounds and the parameter satisfies
$k'\leq p(k)$.
\end{definition}

Observe that we implicitly require that all nodes compute the
same parameter $k'$, as we turn the instance $(G,k)$ into the
instance $(G',k')$ and the parameter must be known to all
nodes.


\begin{definition}
For parameterized problems $\PPP_1$ and $\PPP_2$
we write $\PPP_1\leq_{\Distributed} \PPP_2$ if
there exist computable
functions $s, r,c,t$ and $p$ and an $(s,r,c,t,p)$-bounded
\Distributed reduction that maps any instance
$(G,k)$ to an instance $(G',k')$ such that
$(G,k)\in \PPP_1 \Leftrightarrow (G',k')\in \PPP_2$.
If $\Distributed=\Local$ we allow unbounded
congestion.
\end{definition}

\begin{definition}
Let $\Pp$ be a set of parameterized problems. We write
$[\Pp]^{\Distributed}$ for the set of all problems
$\PPP$ with $\PPP\leq_\Distributed \PPP'$ for some $\PPP'\in \Pp$.
\end{definition}

The next lemma shows that distributed reductions preserve
fixed-parameter tractability, as desired. After turning the instance
$(G,k)$ of $\PPP_1$ into an equivalent instance $(G',k')$
of $\PPP_2$, we simulate
the passing of a message from $x$ to $y$ along an edge of $G'$ by
passing it along the path $\eta(\{x,y\})$ in $G$.

\begin{lemma}\label{lem:preservation-reduction}
Let $\PPP_1\leq_\Distributed \PPP_2$ and assume that $\PPP_2$
is in \Distributed-\FPT. Then also $\PPP_1$ is in \Distributed-\FPT.
\end{lemma}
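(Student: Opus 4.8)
The plan is to first apply the reduction and then \emph{simulate} the assumed \Distributed-\FPT algorithm for $\PPP_2$ on the virtual instance $(G',k')$, using the physical network $G$ to carry the messages of $G'$ along the paths prescribed by $\eta$. Concretely, I would run the $(s,r,c,t,p)$-bounded \Distributed reduction witnessing $\PPP_1\le_\Distributed \PPP_2$; after its $t(k)$ rounds every node $v\in V(G)$ knows the set of all $x\in V(G')$ with $\nu(x)=v$ and all paths $\eta(\{x,y\})$ running through $v$. Let $\mathcal{A}$ be an algorithm deciding $\PPP_2$ in $f(k')$ rounds. Each node $v$ now takes over the role of every $G'$-node $x$ it stores; since local computation is unrestricted, $v$ may simulate arbitrarily many such $x$ at no communication cost, and a message that $x$ would send to a neighbour $y$ in one round of $\mathcal{A}$ is generated by $\nu(x)$ and forwarded along $\eta(\{x,y\})$ to $\nu(y)$, which passes it to its copy of $y$.

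The crux is to bound by a function of $k$ alone the number of rounds of $G$ needed to realize one round of $\mathcal{A}$. In the \Local model there is no bandwidth restriction, so every message arrives within $r(k)$ rounds (the radius bound) and congestion plays no role. In the \Congest and \CCongest models two additional quantities enter: since $|V(G')|\le |G|^{s(k)}$, the identifiers of $G'$ have length $\Oof(s(k)\log n)$, so each $G'$-message is split into $\Oof(s(k))$ packets of $\Oof(\log n)$ bits; and the congestion bound ensures that at most $c(k)$ of the paths traverse any fixed edge of $G$, hence at most $\Oof(c(k)\,s(k))$ packets cross each edge per simulated round. Routing these packets along paths of length at most $r(k)$ therefore costs at most $h(k)\coloneqq \Oof(r(k)\,c(k)\,s(k))$ rounds of $G$ (in the \CCongest model the all-to-all links can only make this cheaper, and in \Local we simply set $h(k)=r(k)$). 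The nodes need not know $h$: they reserve a fixed block of rounds for each simulated round and thereby stay synchronized.

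Finally I would assemble the pieces. A node $v\in V(G)$ accepts iff at least one of its simulated $G'$-nodes accepts in $\mathcal{A}$; since the global decision is the disjunction of all node outputs, the simulation accepts $(G,k)$ iff $\mathcal{A}$ accepts $(G',k')$, which by the reduction property holds iff $(G,k)\in\PPP_1$. The whole procedure runs in $t(k)+f(k')\cdot h(k)$ rounds, and because $k'\le p(k)$ and $f$ is computable we may bound $f(k')\le F(k)\coloneqq \max_{j\le p(k)}f(j)$, itself computable. Thus the total is at most the computable function $t(k)+F(k)\,h(k)$ of $k$, witnessing that $\PPP_1$ lies in \Distributed-\FPT.

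I expect the main obstacle to be precisely the bandwidth bookkeeping of the previous paragraph in the \Congest and \CCongest cases: one must check that delivering all messages of a single round of $\mathcal{A}$ through $G$ under the $\Oof(\log n)$-bandwidth constraint costs only a $k$-bounded number of rounds, which is exactly what the radius, congestion and size bounds $r,c,s$ of the reduction are tailored to guarantee, whereas in \Local (where unbounded congestion is permitted) the argument is essentially immediate.
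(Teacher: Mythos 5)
Your proposal is correct and follows essentially the same route as the paper's proof: run the reduction, then simulate the $\PPP_2$ algorithm round by round, routing each $G'$-message along its $\eta$-path and paying the factors $s(k)$ (longer identifiers), $c(k)$ (congestion), and $r(k)$ (path length), finally taking the disjunction of the simulated outputs. The only cosmetic differences are that the paper attaches the routing path to each message (incurring an extra $r(k)$ factor, i.e.\ $r(k)^2$ in total) where you let intermediate nodes use their stored paths, and that your bound $\max_{j\le p(k)} f(j)$ is slightly more careful than the paper's $f(p(k))$ about the monotonicity of $f$.
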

\begin{proof}
We need to show that there exists a \Distributed~algorithm
that on input
$(G,k)$ decides whether $(G,k)\in \PPP_1$ in~$g(k)$
rounds, for some computable function $g$. The algorithm
proceeds as follows. Let $n\coloneqq |V(G)|$.

As $\PPP_1\leq_\Distributed \PPP_2$, there exist
computable functions $s, r,c,t$ and $p$ and
an $(s,r,c,t,p)$-bounded \Distributed reduction that
maps any instance $(G,k)$ to an instance $(G',k')$ such that
$(G,k)\in \PPP_1 \Longleftrightarrow (G',k')\in \PPP_2$
and such that $k'\leq p(k)$. In case $\Distributed=\Local$
we may have unbounded congestion.
On input $(G,k)$ we apply this reduction and compute in $t(k)$
rounds an instance $(G',k')$ with the above properties.
We write $\nu$ and $\eta$ for the mappings representing the
graph $G'$ in $G$ (see \Cref{def:reduction}).

As $\PPP_2$
is in \Distributed-\FPT, there exists a computable function $f$
so that we can solve the instance $(G',k')$ in $f(k')\leq f(p(k))$
steps in the \Distributed~model. We simulate the run of
this algorithm on $(G',k')$ in the graph $G$. A message that is
sent by the algorithm on~$(G',k')$ may have size at most
$\Oof(\log |G'|)=\Oof(s(k)\cdot \log n)$,
unless \Distributed = \Local.
Whenever
a message is supposed to be sent along an edge $\{x,y\}\in E(G')$,
we send this message between the
appropriate vertices $u$ and $v$ of $G$ such that
$\nu(x)=u$ and $\nu(y)=v$ along the path $\eta(\{x,y\})$.
The path $\eta(\{x,y\})$ has length at most $r(k)$, it can
hence be encoded by $r(k)\cdot \log n$ bits that we send
along with the message to make routing in $G$ possible (the
factor $\log n$ comes from the ids of the vertices of
size $\log n$). Due to constraints on congestion, we may not
be able to send all messages at once. However, by assumption,
each edge of $G$ appears in at most~$c(k)$ paths
$\eta(\{x,y\})$. Hence, each message can be sent after
waiting for at most~$c(k)$ rounds until the transmission line is free.
Thus, the simulation of sending one message takes at most
$s(k)\cdot c(k)\cdot r(k)^2$ rounds.  As the functions~$r$ and
$c$ are computable, we can compute the number
$s(k)\cdot c(k)\cdot r(k)^2$ and synchronize the network accordingly.
In case $\Distributed=\Local$ we do not have to wait for
free transmission lines and can transmit each message in time
$r(k)$, and synchronize the network accordingly.

The total running time of the algorithm is hence $g(k)\coloneqq
t(k)+f(p(k))\cdot r(k)^2\cdot c(k)\cdot s(k)$, which is again a computable
function. After this time, every node $v\in V(G)$
returns the disjunction of the answers of the nodes
$x\in V(G')$ with $\nu(x)=v$. Hence, the algorithm accepts
$(G,k)$ if and only some node in~$G'$ accepts $(G',k')$. Hence,
the algorithm correctly decides~$\PPP_1$ in $g(k)$ rounds,
as desired.
\end{proof}


%

The following example is simple but instructive, as it is not a
valid parameter preserving reduction in classical parameterized
complexity.

\begin{example}\label{example:clique-ds}
\noindent\textsc{Clique Domination} $\leq_\Local$ \textsc{Red-Blue Dominating Set}.
\end{example}

\begin{proof}
In the \textsc{Clique Domination} problem we get as input a graph
$G$ and two integers $k,\ell\in \N$. The problem is to decide
whether $G$ contains a set of at most $k$ vertices that dominates
every clique of size (exactly) $\ell$ in $G$. The parameter is
$k+\ell$. The input to the \textsc{Red-Blue Dominating Set} problem
is a graph $G$ whose vertices are colored red or blue,
and an integer $k\in \N$. The problem is to decide whether
there exists a set of at most~$k$ red vertices that dominate all
blue vertices. The parameter is $k$.

On input $(G,k,\ell)$ we create a copy of each clique
of size~$\ell$ and color the vertices of the newly created vertices
blue. All original vertices are colored red. We denote an original
vertex $v$ by $(v,0)$ and its copies by $(v,i)$ for some $i\in\N$.
Introduce all edges $\{(u,i), (v,0)\}$ for $i\in \N$ and
\mbox{$\{u,v\}\in E(G)$}. The mapping $\nu:V(G')
\rightarrow V(G)$ maps each vertex $(u,i)\in V(G')$ to the
vertex $u$ in $V(G)$ and the mapping $\eta: E(G')\rightarrow
\Pp(G)$ maps each edge $\{(u,i), (v,0)\}$ to the edge
$\{u,v\}\in E(G)$.  Observe that the resulting graph $G'$ can
have size $\Oof(n^\ell)$, where $n=|V(G)|$, and that this
reduction has unbounded congestion. Clearly, the graph $G'$
contains a set of $k$ red vertices dominating all blue vertices
if and only if the graph $G$ contains a set of $k$ vertices
dominating all cliques of size $\ell$. Observe that we can compute
the reduction in a constant number of rounds, hence we can
set $t(k)=t$ for some $t$. We can set $s(k+\ell)=k+\ell$,
$r(k+\ell)=1$, $p(k+\ell)=k+\ell$. Then, the above is an
$(s,r,\infty,t,p)$-bounded \Local reduction.
\end{proof}

\begin{lemma}\label{lem:reductions-transitive}
If $\PPP_1\leq_\Distributed \PPP_2$ and $\PPP_2\leq_\Distributed \PPP_3$, then \mbox{$\PPP_1\leq_\Distributed \PPP_3$}.
\end{lemma}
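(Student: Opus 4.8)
The plan is to compose the two given reductions into a single one. Suppose $\PPP_1 \leq_\Distributed \PPP_2$ is witnessed by an $(s_1,r_1,c_1,t_1,p_1)$-bounded reduction with mappings $\nu_1,\eta_1$, turning an instance $(G,k)$ into $(G',k')$, and $\PPP_2 \leq_\Distributed \PPP_3$ is witnessed by an $(s_2,r_2,c_2,t_2,p_2)$-bounded reduction with mappings $\nu_2,\eta_2$, turning $(G',k')$ into $(G'',k'')$. I want to exhibit an $(s,r,c,t,p)$-bounded \Distributed reduction from $(G,k)$ directly to $(G'',k'')$, for suitable computable functions, so that $(G,k)\in\PPP_1 \Leftrightarrow (G'',k'')\in\PPP_3$; the biconditional is immediate by chaining the two given equivalences.

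First I would run the first reduction on $(G,k)$ in $t_1(k)$ rounds to obtain $(G',k')$ encoded in $G$ via $\nu_1,\eta_1$. I then run the second reduction, but on $G'$ as \emph{simulated} inside $G$: wherever the second reduction needs to send a message along an edge of $G'$, I route it along the corresponding path $\eta_1$ in $G$, exactly as in the proof of \Cref{lem:preservation-reduction}. By that simulation argument, the $t_2(k')$ rounds of the second reduction cost at most $t_2(k') \cdot r_1(k)^2 \cdot c_1(k) \cdot s_1(k)$ rounds in $G$ (and only $t_2(k')\cdot r_1(k)$ rounds if $\Distributed=\Local$), and since $k'\leq p_1(k)$ this is bounded by a computable function of $k$, giving the bound $t$. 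The output is $(G'',k'')$ with $k''\leq p_2(k')\leq p_2(p_1(k))=:p(k)$ and $|V(G'')|\leq |V(G')|^{s_2(k')}\leq |G|^{s_1(k)\cdot s_2(p_1(k))}=:|G|^{s(k)}$.

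The composed mappings are $\nu\coloneqq \nu_1\circ\nu_2\colon V(G'')\to V(G)$ and, for an edge $e=\{x,y\}\in E(G'')$, the path $\eta(e)$ obtained by concatenating, along the $G'$-path $\eta_2(e)=(z_0,\ldots,z_m)$, the $G$-paths $\eta_1(\{z_{i-1},z_i\})$ (each edge of $\eta_2(e)$ is an edge of $G'$, so $\eta_1$ is defined on it); this yields a genuine path in $G$ between $\nu(x)$ and $\nu(y)$. Its length is at most $r_2(k')\cdot r_1(k)$, bounding the radius by $r\coloneqq r_2(p_1(k))\cdot r_1(k)$. For the congestion, a single edge of $G$ lies in $\eta(e)$ at most once for each $G'$-edge of $\eta_2(e)$ whose image under $\eta_1$ uses it, so the number of $G''$-edge-paths through a fixed $G$-edge is at most $c_1(k)\cdot c_2(k')$, giving $c\coloneqq c_1(k)\cdot c_2(p_1(k))$; in the \Local case congestion is unbounded and this step is vacuous. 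Each node of $G$ stores the $\nu$-preimages and the concatenated paths, which it can assemble locally from the data already stored for the two individual reductions.

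The main obstacle is bookkeeping rather than any conceptual difficulty: one must verify that the second reduction can indeed be \emph{executed} over the simulated graph $G'$ within the claimed round bound—this is precisely the content of \Cref{lem:preservation-reduction}, applied to the algorithm realizing the second reduction rather than to a decision procedure—and that the concatenated path and congestion bounds remain computable functions of $k$ alone after substituting $k'\leq p_1(k)$. Care is needed because a path $\eta_1(\{z_{i-1},z_i\})$ may traverse a $G$-edge more than once in principle, but since paths in $G$ need not be simple only the additive accounting over the $\eta_2$-path matters for both the radius and congestion estimates above.
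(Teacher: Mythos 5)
Your proof is correct and follows essentially the same route as the paper: compose the two reductions, simulate the second one over the encoded graph $G'$ via the routing argument of \Cref{lem:preservation-reduction}, and set $s_3(k)=s_1(k)s_2(p_1(k))$, $r_3(k)=r_1(k)r_2(p_1(k))$, $c_3(k)=c_1(k)c_2(p_1(k))$, $t_3(k)=t_1(k)+t_2(p_1(k))r_1(k)^2c_1(k)s_1(k)$, $p_3(k)=p_2(p_1(k))$, exactly as the paper does. Your explicit description of the composed mappings $\nu=\nu_1\circ\nu_2$ and the concatenated paths for $\eta$ is a welcome bit of extra detail the paper leaves implicit.
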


\begin{proof}
As $\PPP_1\leq_\Distributed \PPP_2$, there exist
computable functions $s_1, r_1,c_1,t_1$ and~$p_1$ and
an $(s_1,r_1,c_1,t_1,p_1)$-bounded 
\Distributed reduction that
maps any instance $(G_1,k_1)$ to an instance $(G_2,k_2)$
such that \mbox{$(G_1,k_1)\in \PPP_1 \Longleftrightarrow (G_2,k_2)
\in \PPP_2$}
and such that $k_2\leq p(k_1)$. As $\PPP_2\leq_\Distributed \PPP_3$, there exist
computable functions $s_2, r_2,c_2,t_2$ and~$p_2$ and an 
$(s_2,r_2,c_2,t_2,p_2)$-bounded \Distributed reduction that
maps any instance $(G_2,k_2)$ to an instance $(G_3,k_3)$
such that 
$(G_2,k_2)\in \PPP_2 \Longleftrightarrow (G_3,k_3)
\in \PPP_3$
and such that $k_3\leq p(k_2)$.
In case $\Distributed=\Local$
we may have unbounded congestion.
We combine these reductions as in the proof of
\Cref{lem:preservation-reduction} to obtain an
$(s_3,r_3,c_3,t_3,p_3)$-bounded reduction
from $\PPP_1$ to $\PPP_3$. As $|G_2|\leq |G_1|^{s_1(k_1)}$ and
$k_2\leq p_1(k_1)$,
we have $|G_3|\leq |G_1|^{s_1(k_1)\cdot s_2(p_1(k_1))}$, and
we can define $s_3(k)\coloneqq s_1(k)\cdot s_2(p_1(k))$.
Similarly, we can define $r_3(k)\coloneqq r_2(p_1(k))\cdot r_1(k)$
and $c_3(k)\coloneqq c_2(p_1(k))\cdot c_1(k)$.
For the construction of $G_3$ in $G_2$ we need to simulate
sending a message between two vertices by routing along
an appropriate path, just as in \Cref{lem:preservation-reduction}.
Observe that we need to send node identifiers of size
$\log |G_2|$ here, which is bounded by $s_1(k_1)\cdot\log |G_1|$.
Hence, we get an additional factor $s_1(k_1)$ in the function
bounding $t_3$, that is, we can set $t_3(k)\coloneqq
t_1(k_1)+ t_2(p_1(k))\cdot r_1(k)^2\cdot c_1(k)\cdot s_1(k)$.
Finally, we can define $p_3(k)\coloneqq p_2(p_1(k))$.
\end{proof}

\section{The distributed $\textsf{WEFT}$-hierarchy}\label{sec:weft}

We now define the \Distributed $\textsf{WEFT}$-hierarchy
analogously to the classical $\textsf{W}$-hierarchy.
In these definitions we interpret the network graph as
a circuit.

\begin{definition}
A \emph{Boolean decision circuit} with $n$ inputs is a tuple
$C=(V,E,\beta)$, where $(V,E)$ is a finite directed
acyclic graph, $\beta: V\rightarrow \{\neg, \vee,
\wedge, \bigvee, \bigwedge\}\cup \{x_1,\ldots, x_n\}$,
such that the following conditions hold:
\begin{enumerate}
\item If $v\in V$ has in-degree $0$, then $\beta(v)\in
\{x_1,\ldots, x_n\}$. These vertices are the
\emph{input gates}.
\item If $v\in V$ has in-degree $1$, then
$\beta(v)=\neg$.
\item If $v\in V$ has in-degree $2$, then
$\beta(v)\in \{\vee,\wedge\}$. Vertices with degree $\leq 2$ are called \emph{small gates}.
\item If $v\in V$ has in-degree at least $3$, then
$\beta(v)\in \{\bigvee, \bigwedge\}$. These vertices are called \emph{large gates}.
\item $(V,E)$ has exactly one vertex of out-degree $0$,
called the \emph{output gate}.
\end{enumerate}
\end{definition}

The circuit computes a function $f_C:\{0,1\}^n\rightarrow \{0,1\}$
in the expected way. We refer to the textbook~\cite{vollmer2013introduction}
for more background on circuit complexity.

To encode a circuit as a
colored graph, observe that the input gates are the only vertices of
in-degree $0$. The label $\beta(v)$ for an input gate $v$ is encoded
by the unique node id by a number between $1$ and $n$
of size at most $\log n$. All other gates are assigned a color in
$\{\neg, \vee,\wedge, \bigvee, \bigwedge, o\}$, marking their
type.

%

\begin{definition}
The \emph{depth} of a circuit $C$ is defined to be the
maximum number of gates (small or large) on an input-output
path in $C$. The \emph{weft} of a circuit $C$ is the maximum
number of large gates on an input-output path in $C$.
\end{definition}

\begin{definition}
We say that a family of circuits $\Ff$ has boun\-ded depth if there is
a constant~$h$ such that every circuit in the family $\Ff$ has
depth at most $h$. We say that $\Ff$ has bounded weft if
there is a constant $t$ such that every circuit in the family $\Ff$
has weft at most $t$. 
A decision circuit $C$ accepts an
input vector $x$ if the single output gate has value~$1$ on input
$x_1,\ldots, x_n$. The \emph{weight} of a boolean vector is the number
of $1$'s in the vector.
\end{definition}

The following definition was given by Downey and Fellows~\cite{downey1995fixed}.

\begin{definition}
Let $\Ff$ be a family of decision circuits (we allow that
$\Ff$ may have many different circuits with a given number
of inputs). To $\Ff$ we associate the parameterized circuit problem
$\Pp_\Ff\coloneqq \{(C,k) : C\in \Ff$  and $C$ accepts an input vector of weight $k\}$. For $t\geq 1$, the class \textsc{Weft}$[t]$ consists of all
parameterized circuit problems $\Pp_\Ff$, where each
circuit in~$\Ff$ has depth bounded by some universal constant
and weft at most $t$.
\end{definition}

We are ready to define the \Distributed \textsf{WEFT}-hierarchy.

\begin{definition}
For $t\geq 1$ we define
$\Distributed\text{-}\Weft[t]\coloneqq [\textsc{Weft}[t]]^\Distributed.$
\end{definition}

The following are standard examples from parameterized complexity
theory, see e.g.~\cite{CyganFKLMPPS15}.
We present the proof
in the appendix for completeness.

\begin{example}\label{ex:mis-weft}
$\textsc{Multicolored Independent Set}\in 
  \CCongest$-$\textsf{WEFT}[1]$ and $\textsc{Red-Blue}$ 
  $\textsc{Dominating Set}\in \CCongest$-$\textsf{WEFT}[2]$.
\end{example}

\begin{proof}[Proof of \cref{ex:mis-weft}]
These are standard examples from parameterized complexity
theory, see e.g.~\cite{CyganFKLMPPS15}. We present
the construction for \textsc{Multicolored Independent
Set} for completeness.

On input $(G,k)$, we construct a circuit of weft $1$ and height~$3$ for
\textsc{Multicolored Independent Set}. We have one input
gate for every vertex $v$ of $G$ that we
identify with the vertex~$v$. The gates which in a satisfying
assignment are assigned the value $1$ will correspond one-to-one
with a multicolored independent set in $G$. To express this,
we state that neither two vertices of the same color, nor two
adjacent vertices can be picked into the multicolored independent
set. Hence, we connect each input with a negation gate and we
write $(\neg v)$ for the corresponding node of the circuit. Now,
for each edge $\{u,v\}\in E(G)$ and for each pair~$(u,v)$
such that $u$ and $v$ have the same color, we introduce one node
$(\neg u \vee \neg v)$ that we connect with
the nodes $(\neg u)$ and $(\neg v)$. Finally, we connect all
these disjunction gates in a big conjunction, which is the output
gate. It is easy to see that
a satisfying assignment corresponds one-to-one to a
multicolored clique.

We have to show that we can construct the circuit with a bounded
\CCongest reduction in a constant number of rounds.
The vertex map $\nu:V(C)\rightarrow V(G)$ takes every
node $v$ and $(\neg v)$ to the vertex
$v$ and every node labeled $(\neg u\vee \neg v)$ to the
smaller (referring to vertex ids in the network graph) of
$u$ and $v$. Assuming $u$ is smaller than $v$, we map the
edge from $(\neg u)$ to $(\neg u\vee \neg v)$ to the length
$0$ path $u,u$ and the edge from~$v$ to the edge~$\{u,v\}$, which is an edge in the congested clique. Finally,
we choose an arbitrary vertex~$v$ that represents the big
conjunction. The edges from this conjunction are mapped to
the vertices that represent the vertices. In total, on each
edge we have congestion at most~$3$.
\end{proof}


The above example shows that the \CCongest \Weft-hierarchy
 is an interesting hierarchy to study. In particular,
we conjecture that the \CCongest \Weft-hierarchy is strict.

Opposed to this,
the \Local \Weft-hierarchy does not behave as
intended. The following lemma is a simple consequence of the
fact that all circuits in \textsc{Weft}$[t]$ have bounded height
and one designated output gate. Hence, for each problem $\PPP_\Ff$
in \textsc{Weft}$[t]$, the radius of each circuit of $\Ff$ is
bounded by a constant, and therefore a \Local algorithm can
learn the whole circuit in a constant number of rounds and
solve the corresponding decision problem.

\begin{lemma}\label{lem:localweft}
For all $t\geq 1$, \Local-$\mathsf{WEFT}[t]\subseteq $
\Local-\FPT.
\end{lemma}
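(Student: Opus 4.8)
The plan is to show that any problem $\PPP_\Ff \in \textsc{Weft}[t]$ reduces via a bounded \Local reduction to a trivially solvable problem, or more directly, that the $\textsc{Weft}[t]$ problem itself lies in \Local-\FPT, from which $\Local\text{-}\mathsf{WEFT}[t] = [\textsc{Weft}[t]]^\Local \subseteq \Local\text{-}\FPT$ follows by \Cref{lem:preservation-reduction}. The core observation, as the lemma statement hints, is that every circuit in a family $\Ff$ witnessing membership in $\textsc{Weft}[t]$ has depth bounded by a universal constant $h$. Since the circuit is the network graph (encoded as a colored graph), and since it has a single output gate from which every input gate is reachable along a directed path of length at most $h$, the entire circuit has radius $\Oof(h)$ when viewed from the output gate.

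First I would make precise how the circuit is presented as an instance of a \Local problem: the underlying graph is $(V,E)$ with the gate-type colors, and the parameter is $k$ (the target weight). Next I would describe the algorithm. The output gate, which is the unique vertex of out-degree $0$, can be identified locally (each node checks whether it has any outgoing edge). In $h$ rounds this output gate can collect, via reverse-direction flooding along the acyclic structure, the complete description of the circuit: every gate, its type, and its wiring. Because depth is at most the constant $h$, every input gate lies within distance $h$ of the output gate, so after $\Oof(h)$ rounds the output gate knows the entire circuit $C$.

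Having learned $C$ entirely, the output gate can solve the decision problem by brute force: it enumerates all $\binom{n}{k}$ weight-$k$ input vectors and evaluates $f_C$ on each, accepting if some vector of weight $k$ is accepted. This computation is purely local at a single node and costs no further communication rounds; the total round count is $\Oof(h)$, which is a constant and in particular bounded by a computable function $f(k)$ (indeed by a constant independent of $k$). Hence the output gate decides correctly, and by our disjunction convention the algorithm accepts exactly when $C$ accepts a weight-$k$ vector. This places each $\PPP_\Ff \in \textsc{Weft}[t]$ in \Local-\FPT.

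The main point requiring care, rather than a genuine obstacle, is the justification that the output gate can reconstruct the \emph{whole} circuit and not merely its $h$-neighborhood in the usual graph-distance sense: one must use that directed depth is bounded so that undirected distance from the output gate to any gate is also bounded by $h$ (every non-input gate has an outgoing edge toward the output, and the longest input-output path has length at most $h$). Given this, the learning step is standard. Finally, since $\Local$-\FPT is closed under $\leq_\Local$ by \Cref{lem:preservation-reduction}, and $\Local\text{-}\mathsf{WEFT}[t]$ is by definition the closure $[\textsc{Weft}[t]]^\Local$ of these problems under \Local reductions, we conclude $\Local\text{-}\mathsf{WEFT}[t] \subseteq \Local\text{-}\FPT$ for every $t \geq 1$.
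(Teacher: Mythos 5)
Your proposal is correct and follows essentially the same route as the paper: the paper also observes that bounded depth together with the single output gate bounds the radius of every circuit in $\textsc{Weft}[t]$ by a constant, so a \Local algorithm can learn the entire circuit in constantly many rounds and decide acceptance of a weight-$k$ vector by brute force, with closure under \Local reductions giving the stated containment. Your additional care about why undirected distance to the output gate is bounded is a welcome elaboration of a point the paper leaves implicit.
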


According to \Cref{lem:mis-not-fpt}, \textsc{Multicolored Independent
Set} $\not\in$ \Local-\textsf{FPT}, and hence the problem also
does not belong to \Local-\textsf{WEFT}$[t]$ for any $t\geq 1$.
This does not reflect our intuition about the complexity of the
problem, and hence, in the following section we define the
\Distributed-\textsf{W}
hierarchy in a different way.
We obviously have \Congest-\textsf{WEFT}$[t]\subseteq \Local$-\textsf{WEFT}$[t]$ for all $t\geq 1$, hence, \textsc{Multicolored Independent
Set} does not belong to \Congest-\textsf{WEFT}$[t]$ for any $t\geq 1$.
Hence, the \Congest-\textsf{WEFT} does not behave as we intend,
nevertheless, it may be an interesting hierarchy to study.


\section{The \textsf{W}- and \textsf{A}-hierarchy}\label{sec:hierarchy}

We follow the approach of Grohe and Flum~\cite{flum2001fixed, flum2006parameterized}
and define the \Distributed $\textsf{W}$- and \textsf{A}-hierarchy
via logic. \emph{First-order formulas} over a vocabulary of
vertex and edge colored graphs
$\{P_1,\ldots, P_s, E_1,\ldots, E_t\}$ are formed from
atomic formulas~$x=y$, $P_i(x)$, and $E_j(x,y)$, where each
$P_i$ is a unary relation symbol and each $E_j$ is
a binary relation symbol, and~$x,y$ are variables (we
assume that we have an infinite supply of variables) by the usual
Boolean connectives~$\neg$~(negation), $\wedge$ (conjunction),
 $\vee$ (disjunction) and existential and universal
quantification~$\exists x,\forall x$ over vertices, respectively.
The free variables of a formula are those not in the scope
of a quantifier, and we write~$\phi(x_1,\ldots,x_k)$ to indicate that
the free variables of the formula~$\phi$ are among $x_1,\ldots,x_k$. A
\emph{sentence} is a formula without free variables.

To define the semantics, we inductively define a
satisfaction relation~$\models$, where for a colored graph $G$,
a formula $\phi(x_1,\ldots,x_k)$, and elements~$a_1,\ldots,a_k\in V(G)$, 
$G\models\phi(a_1,\ldots,a_k)$
means that~$G$ satisfies~$\phi$ if the free variables~$x_1,\ldots,x_k$
are interpreted by~$a_1,\ldots,a_k$, respectively.
We refer to the textbook~\cite{libkin2013elements} for extensive
background on first-order logic over finite structures.

\begin{definition}
Both $\Sigma_0$ and $\Pi_0$ denote the class of
quanti\-fier-free formulas. For $t\geq 0$, we
let $\Sigma_{t+1}$ be the class of all formulas
$\exists x_1\ldots\exists x_\ell\,\phi$,
where $\phi\in \Pi_t$, and $\Pi_{t+1}$ the class of all formulas
$\forall x_1\ldots\forall x_\ell\,\phi$,
where $\phi\in \Sigma_t$. Furthermore, for $t\geq 1$,
$\Sigma_{t,1}$ denotes the class of all formulas of~$\Sigma_t$
such that all quantifier blocks after the leading existential
block have length at most~$1$.
\end{definition}

For every vocabulary we fix some computable encoding (injective) function
$\mathrm{enc}(\cdot)$ that takes as input a formula $\phi$ and
outputs a  bitstring (a string over $\{0,1\}$). For example,
$\mathrm{enc}(\phi)$ could be a string representation of the
syntactic tree of $\phi$.
We write $|\phi|$ for the length
of the encoding $|\mathrm{enc}(\phi)|$. We also fix some
computable bijection $\mathrm{num}(\cdot)$ from bitstrings
to $\N$.

\begin{definition}
The model-checking problem
for a set~$\Phi$ of sentences, denoted $\textsc{MC}$-$\Phi$,
is the problem to decide for a given (colored) graph $G$
and sentence $\phi\in \Phi$, whether
$\phi$ is satisfied on $G$. The parameter is $|\phi|$.
\end{definition}

By Corollary 7.27 of~\cite{flum2006parameterized},
for all $t\geq 1$, $\textsc{MC}$-$\Sigma_{t,1}$ is
complete for~$\textsf{W}[t]$ under fpt-reductions and
by Definition 5.7 and Lemma 8.10 of~\cite{flum2006parameterized}
for all $t\geq 1$, $\textsc{MC}$-$\Sigma_t$ is complete
for $\textsf{A}[t]$ under fpt-reductions. We want to take this as
the definition for the analogous distributed hierarchies, however,
we have to clarify how we represent formulas as integer parameters.

\begin{definition}
The distributed model-checking problem
for a set~$\Phi$ of sentences, denoted $\textsc{DMC}$-$\Phi$,
is the problem to decide for a given (colored) graph $G$
and integer $k$, if there is $\phi\in \Phi$ with
$\mathrm{num}(\mathrm{enc}(\phi))=k$, such that
$\phi$ is satisfied on $G$. The parameter is $k$.
\end{definition}

Observe that, since there are only a bounded number of formulas
of a fixed length, there is a computable monotone function $f$
such that $|\phi|\leq f(k)$ and $k\leq f(|\phi|)$, where
$\mathrm{num}(\mathrm{enc}(\phi))=k$. Hence, the two
definitions are fpt-equivalent, and the latter is suitable for
our framework of distributed computing. Due to the above
observation we do not distinguish between a formula and
the integer representing it.

\begin{definition}
For $t\geq 1$, we let
$\Distributed\text{-}\mathsf{A}[t]\coloneqq [\textsc{DMC}\text{-}\Sigma_t]^\Distributed$
and \linebreak
$\Distributed\text{-}\mathsf{AW}[\star]=  [\textsc{DMC}\text{-}\FO]^\Distributed.$
\newline Similarly, we let
$\Distributed\text{-}\mathsf{W}[t]\coloneqq [\textsc{DMC}\text{-}\Sigma_{t,1}]^\Distributed$
\end{definition}

\begin{example}\label{example:w2}
$\textsc{Multicolored Independent Set}\in \Distributed$-$\mathsf{W}[1]$, and \newline \textsc{Red-Blue} \textsc{Dominating Set} $\in \Distributed$-$\mathsf{W}[2]$.
\end{example}
\begin{proof}
The existence of a multicolored independent set of size~$k$ can
be expressed by the $\Sigma_{1,1}$-formula:
$~  \exists x_1\ldots \exists x_k (\bigwedge_{1\leq i\leq k}P_i(x_i)\wedge \bigwedge_{1\leq i\neq j\leq k} \neg E(x_i,x_j)\big)$.
Here, the $P_i$ are unary predicates that encode the colors of
vertices (we assume for simplicity that the graph is colored only with
the colors $P_1,\ldots, P_k$).
Similarly, the existence of a
red-blue dominating set of size at most $k$ can be expressed by
a $\Sigma_{2,1}$-formula.
\end{proof}

It is immediate from the definitions that for all $t\geq 1$ we have
\Distributed-\textsf{W}$[t]\subseteq \Distributed$-\textsf{A}$[t]\subseteq \Distributed$-\textsf{AW}$[\star]$. Furthermore,
we have \Distributed-\textsf{W}$[1]=\Distributed$-\textsf{A}$[1]$.
We conjecture that the above inclusions are strict and that we have
proper hierarchies in the \Congest and \CCongest model.

\begin{lemma}
\Distributed-$\mathsf{WEFT}[t]\subseteq \Distributed$-$\mathsf{W}[t]$.
Furthermore, if \\$\Distributed\in $ $\{\Local, \Congest\}$, then
the inclusion is strict.
\end{lemma}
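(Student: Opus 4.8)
The plan is to separate the two assertions. For the inclusion it suffices, by transitivity of distributed reductions (\Cref{lem:reductions-transitive}) and the definition of $[\cdot]^\Distributed$, to prove the single ``generating'' reduction $\Pp_\Ff \leq_\Distributed \textsc{DMC}$-$\Sigma_{t,1}$ for every family $\Ff$ of circuits of bounded depth and weft at most $t$. Indeed, any $\PPP \in \Distributed$-$\mathsf{WEFT}[t]$ satisfies $\PPP \leq_\Distributed \Pp_\Ff$ for some such $\Pp_\Ff \in \textsc{Weft}[t]$, and composing with the generating reduction yields $\PPP \leq_\Distributed \textsc{DMC}$-$\Sigma_{t,1}$, i.e.\ $\PPP \in \Distributed$-$\mathsf{W}[t]$.

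For the generating reduction I would reuse the classical fact that $\textsc{MC}$-$\Sigma_{t,1}$ is $\mathsf{W}[t]$-complete \cite{flum2006parameterized}, but implement its circuit-to-formula direction as a distributed reduction. Given a circuit $C$ (presented as the colored input graph $G$, with gates as vertices, the wire relation as the binary predicate, and gate types as unary predicates) and weight parameter $k$, I would take the model-checking structure $G'$ to be $C$ itself, so that $\nu$ and $\eta$ are essentially the identity, and define a single $\Sigma_{t,1}$-sentence $\phi$ (depending only on $k$, $t$, and the depth bound $h$ of $\Ff$) asserting that $C$ accepts a weight-$k$ input. The leading existential block $\exists y_1\ldots\exists y_k$ guesses the $k$ input gates set to $1$; the remaining matrix evaluates the output gate top-down, turning each large $\bigwedge$-gate into a universal quantifier and each large $\bigvee$-gate into an existential one. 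Since at most $t$ large gates occur on any input-output path, this produces at most $t$ alternating single-variable blocks after the leading block, while the boundedly many small gates between consecutive large gates (their number is controlled by the depth $h$) are folded into the quantifier-free part via bounded-length wire-path descriptions. This is exactly the weft-to-alternation normalization of \cite{flum2006parameterized}, and it yields a sentence in $\Sigma_{t,1}$ whose length is bounded by a computable function of $k$, hence a parameter $k' = \mathrm{num}(\mathrm{enc}(\phi))$ bounded in $k$.

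Because $\nu$ and $\eta$ can be taken essentially as the identity, the reduction has radius $\Oof(h) = \Oof(1)$, congestion $\Oof(1)$, and is computable in $\Oof(1)$ rounds with messages of size $\Oof(\log n)$; it is therefore a valid bounded $\Distributed$ reduction simultaneously in $\Local$, $\Congest$, and $\CCongest$. The hard part will be the formula construction: making the top-down evaluation genuinely $\Sigma_{t,1}$ rather than merely $\Sigma_t$ requires the careful normalization so that only large gates spend quantifier alternations, and one must verify that the bounded small-gate layers can be navigated inside a quantifier-free matrix of length bounded in $k$ (independently of $|G|$). If one prefers to normalize the circuit explicitly by inserting gadget vertices, the remaining subtlety is to keep the congestion bounded in the $\Congest$ case; keeping $\nu,\eta$ close to the identity and pushing all small-gate handling into the matrix avoids this issue.

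For strictness when $\Distributed \in \{\Local, \Congest\}$, I would exhibit $\textsc{Multicolored Independent Set}$ as a separating problem for every level $t$. By \Cref{example:w2} it lies in $\Distributed$-$\mathsf{W}[1]$, and since $\Sigma_{1,1}\subseteq\Sigma_{t,1}$ (giving $\Distributed$-$\mathsf{W}[1]\subseteq\Distributed$-$\mathsf{W}[t]$ by the trivial identity reduction) it lies in $\Distributed$-$\mathsf{W}[t]$. On the other hand, by \Cref{lem:localweft} we have $\Local$-$\mathsf{WEFT}[t]\subseteq\Local$-$\FPT$ (and $\Congest$-$\mathsf{WEFT}[t]\subseteq\Local$-$\mathsf{WEFT}[t]$), while $\textsc{Multicolored Independent Set}\notin\Local$-$\FPT$ by \Cref{lem:mis-not-fpt}. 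Hence $\textsc{Multicolored Independent Set}\notin\Distributed$-$\mathsf{WEFT}[t]$ for any $t$, witnessing $\Distributed$-$\mathsf{WEFT}[t]\subsetneq\Distributed$-$\mathsf{W}[t]$.
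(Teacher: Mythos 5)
Your proposal is correct and follows essentially the same route as the paper: the inclusion via the classical fact (Lemma~7.26 of~\cite{flum2006parameterized}) that weighted satisfiability of a bounded-depth, weft-$t$ circuit is expressible by a $\Sigma_{t,1}$-sentence, realized as a distributed reduction with essentially identity $\nu,\eta$, and strictness via \textsc{Multicolored Independent Set}, combining \Cref{example:w2}, \Cref{lem:localweft}, and \Cref{lem:mis-not-fpt} (with $\Congest$-$\mathsf{WEFT}[t]\subseteq\Local$-$\mathsf{WEFT}[t]$ for the $\Congest$ case). You simply spell out the circuit-to-formula normalization and the boundedness parameters of the reduction, which the paper delegates to the citation.
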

\begin{proof}
The inclusion follows from the fact that satisfiability of a circuit of weft
$t$ can be expressed by a $\Sigma_{t,1}$-formula, see
Lemma 7.26 of~\cite{flum2006parameterized}. Furthermore,
According to \Cref{lem:mis-not-fpt}, \textsc{Multicolored Independent
Set} $\not\in$ \Local-\textsf{FPT} and according to
\cref{lem:localweft} the problem also
does not belong to \Local-\textsf{WEFT}$[t]$ for any $t\geq 1$.
\end{proof}

\begin{lemma}
$\Distributed$-$\FPT\subseteq \Distributed$-$\mathsf{W}[1]$.
\end{lemma}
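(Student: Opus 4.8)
The goal is to exhibit, for every $\PPP\in\Distributed$-$\FPT$, a reduction $\PPP\leq_{\Distributed}\textsc{DMC}\text{-}\Sigma_{1,1}$; by definition this gives $\PPP\in[\textsc{DMC}\text{-}\Sigma_{1,1}]^{\Distributed}=\Distributed$-$\mathsf{W}[1]$. As in classical parameterized complexity, the idea is to let the reduction itself do the work of deciding $\PPP$ and then output an essentially trivial instance of the model-checking problem, where the answer is recorded only in the coloring. Concretely, fix $\PPP\in\Distributed$-$\FPT$ and let $\mathcal{A}$ be a $\Distributed$ algorithm deciding $\PPP$ in $f(k)$ rounds. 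The reduction first runs $\mathcal{A}$ on $(G,k)$ for $f(k)$ rounds, after which each node $v$ knows its own output $o_v\in\{\text{accept},\text{reject}\}$.

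Next I would build $G'$ as a recolored copy of $G$. For each $v\in V(G)$ create one vertex $x_v$ with $\nu(x_v)=v$, colored with a fixed unary predicate $Q:=P_1$ exactly when $o_v=\text{accept}$; for each edge $\{u,v\}\in E(G)$ add the edge $\{x_u,x_v\}$ and let $\eta(\{x_u,x_v\})$ be the length-$1$ path through $\{u,v\}$. Then $G'\cong G$ as a graph (hence connected, so $G'\in\GGG_{1,1}$), and it encodes the accept/reject pattern in the color $Q$. Finally set the output parameter to the constant $k':=\mathrm{num}(\mathrm{enc}(\varphi))$ for the fixed sentence $\varphi:=\exists x\,Q(x)$, which lies in $\Sigma_{1,1}=\Sigma_1$. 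Correctness then follows from the disjunction semantics of distributed decision: $(G,k)\in\PPP$ iff some node of $G$ accepts iff some vertex of $G'$ carries color $Q$ iff $G'\models\varphi$ iff $(G',k')\in\textsc{DMC}\text{-}\Sigma_{1,1}$ (using that $\mathrm{num}\circ\mathrm{enc}$ is injective, so $\varphi$ is the unique sentence with encoding $k'$).

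The step I expect to be the main obstacle, and the one worth emphasizing, is reconciling this with the requirement of \Cref{def:reduction} that all nodes compute the \emph{same} output parameter $k'$: a distributed algorithm in general only produces the global answer as a disjunction of local outputs, so no single node knows whether $(G,k)\in\PPP$. The point is that we never need the global answer to choose $k'$ — the sentence $\varphi$ is chosen \emph{independently of the instance}, so every node computes the same constant $k'$ locally, while the instance-dependent information lives entirely in the (locally computable) coloring of $G'$. Once this is observed, the remaining bookkeeping is routine: the construction of $G'$ from the outputs of $\mathcal{A}$ needs no communication, so $t(k)=f(k)+\Oof(1)$; since $|V(G')|=|V(G)|$ we take $s(k)=1$; each $\eta$-image is a single edge, so $r(k)=1$ and every edge of $G$ lies on exactly one such path, giving $c(k)=1$; and $p$ is the constant function $p(k)=k'$. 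All bounds are computable, so this is an $(s,r,c,t,p)$-bounded $\Distributed$ reduction, completing the argument.
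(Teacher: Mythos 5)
Your proposal is correct and follows essentially the same route as the paper: run the \FPT algorithm, record the per-node accept/reject outputs in a fresh unary predicate, and reduce to model-checking the fixed $\Sigma_{1,1}$-sentence $\exists x\, P_1(x)$. Your extra care about the constant parameter $k'$ being computable locally by every node and the explicit $(s,r,c,t,p)$ bounds are details the paper leaves implicit, but the argument is the same.
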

\begin{proof}
Assume $\PPP\in \Distributed$-$\FPT$ and let $(G,k)$
be an instance of $\PPP$. By assumption there exists an
algorithm that decides in $f(k)$ rounds whether $(G,k)\in \PPP$,
that is, each node produces in $f(k)$ rounds a binary output
accept or reject. We use this algorithm as a reduction to
$\textsc{MC}$-$\Sigma_{1,1}$. We introduce a unary predicate
$P_1$ such that exactly the nodes~$v$ that accept $(G,k)$
satisfy $P_1(v)$. Then $(G,k)\in \PPP \Leftrightarrow
G\models\exists x P_1(x)$. Hence, $\PPP\in [\textsc{MC}\text{-}\Sigma_{t,1}]^\Distributed$.
\end{proof}


\section{The first levels of the LOCAL hierarchies}\label{seq:first-levels}
\subsection{Collapse of the hierarchies}

We now study the \Local hierarchies, which behave different
than expected. E.g.\ \Cref{example:clique-ds} and
\Cref{example:w2} imply that \mbox{$\textsc{Clique Domination}\in
\Local$-$\textsf{W}[2]$}. \textsc{Clique Domination} is a
classical example of an \textsf{A}$[2]$-complete problem
(in classical parameterized complexity). Observe that the
problem can be formulated as a $\Sigma_{2}$ formula,
while the fact that~$\ell$ is an input parameter makes it impossible
to express it as a $\Sigma_{2,1}$ formula. It is the infinite computational power of individual nodes in the \Local
model that makes it possible to reduce the problem to the
 \textsc{Red-Blue Dominating Set} problem with parameter~$k$ only.
Even more surprisingly,
we show that in fact the \LocalAW (and
hence the whole \Local-\textsf{A}-hierarchy) collapses
to \LocalW{2}.

We follow that by showing that the first levels of the hierarchies are not equal,
which gives the full pictures of the \Local-hierarchy.

\begin{theorem}\label{thm-local-hierarchy}
  $\LFPT \subsetneq \LocalW{1} \subsetneq \LocalW{2} = \LocalAW$
\end{theorem}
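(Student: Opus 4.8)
The statement is a chain of four assertions; my plan is to dispatch the three inclusions quickly from results already available, verify the two strictness separations by exhibiting witness problems, and then devote the real work to the collapse $\LocalW{2}=\LocalAW$. The inclusion $\LFPT\subseteq\LocalW{1}$ is the lemma $\Distributed$-$\FPT\subseteq\Distributed$-$\mathsf{W}[1]$ instantiated at $\Distributed=\Local$. For $\LocalW{1}\subseteq\LocalW{2}$ I use that $\Sigma_{1,1}=\Sigma_1\subseteq\Sigma_{2,1}$ (a $\Sigma_1$-sentence becomes $\Sigma_{2,1}$ by appending one unused universal quantifier), so the identity map is a \Local reduction from $\textsc{DMC}\text{-}\Sigma_{1,1}$ to $\textsc{DMC}\text{-}\Sigma_{2,1}$, and \Cref{lem:reductions-transitive} closes the argument. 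Finally $\LocalW{2}\subseteq\LocalAW$ is the case $t=2$ of the chain $\Distributed$-$\mathsf{W}[t]\subseteq\Distributed$-$\mathsf{A}[t]\subseteq\Distributed$-$\mathsf{AW}[\star]$ recorded after \Cref{example:w2}.

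For strictness, \textsc{Multicolored Independent Set} separates $\LFPT$ from $\LocalW{1}$: it lies in $\LocalW{1}$ by \Cref{example:w2} but not in $\LFPT$ by \Cref{lem:mis-not-fpt}. The problem \textsc{Clique Domination} separates $\LocalW{1}$ from $\LocalW{2}$: membership in $\LocalW{2}$ follows by composing \Cref{example:clique-ds} with the membership of \textsc{Red-Blue Dominating Set} in $\LocalW{2}$ (\Cref{example:w2}) through \Cref{lem:reductions-transitive}, whereas the fact that \textsc{Clique Domination}$\notin\LocalW{1}$ is a genuinely new lower bound that I would isolate as its own lemma and prove separately in this section, using it here as a black box.

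Only $\LocalAW\subseteq\LocalW{2}$ remains, and it suffices to give a \Local reduction from $\textsc{DMC}\text{-}\FO$ to $\textsc{DMC}\text{-}\Sigma_{2,1}$ (then close under $\leq_\Local$ by \Cref{lem:reductions-transitive}). The engine is Gaifman's locality theorem: the input sentence $\phi$ — which is the parameter and therefore known to every node — is equivalent to a Boolean combination of basic local sentences $\exists x_1\ldots\exists x_n\big(\bigwedge_{i<j}\dist(x_i,x_j)>2r\wedge\bigwedge_i\chi(x_i)\big)$, where each $\chi$ is $r$-local and $r$, $n$, and the number of sentences are bounded by a function of $|\phi|$. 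The reduction first has every node collect its $r$-ball (possible in $r\le f(k)$ rounds in the \Local model) and evaluate each occurring local formula $\chi$, recording the result in a fresh unary color $C_\chi$. It then builds $G'$ on the same vertex set, adding a binary relation $D$ for the bounded threshold $\dist_G\le 2r$; each new edge is witnessed by a path of length $\le 2r\le f(k)$, so the radius stays bounded (unbounded congestion being permitted for \Local). Over $G'$ each basic local sentence becomes a scattered-set query $\exists\bar x\big(\bigwedge_i C_\chi(x_i)\wedge\bigwedge_{i<j}\neg D(x_i,x_j)\big)$, a $\Sigma_1$-sentence with quantifier-free matrix.

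It remains to turn the Boolean combination of these sentences into a single $\Sigma_{2,1}$-sentence. Since basic local sentences share no variables, the positive occurrences merge into one leading existential block, while each negated occurrence is the $\Pi_1$ statement ``no scattered $C_\chi$-set of size $n$'', an upper bound on a packing number; placing the combination in disjunctive normal form exhibits it as a disjunction of conjunctions of one $\Sigma_1$- and one $\Pi_1$-part, hence a $\Sigma_2$-sentence. The decisive step is to bring this down to $\Sigma_{2,1}$ by shrinking every quantifier block after the leading existential one to length $1$, via a product/tuple gadget that adds at most $|G|^{f(k)}$ configuration vertices carrying precomputed type colors. This is exactly where I expect the main obstacle: the standard block-shortening gadget co-locates the coordinates of a tuple at a single vertex, but the tuples that matter here encode pairwise-far (scattered) witnesses spread across the whole graph, so co-locating them would violate the bounded radius and round count demanded by \Cref{def:reduction}. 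Designing a gadget that collapses the universal packing-bound conditions to single quantifiers while keeping the construction local is the technical heart of the collapse, and it is precisely what makes the argument specific to the \Local model rather than a generic fpt-reduction.
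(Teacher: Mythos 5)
Your assembly of the inclusions and of the two separations matches the paper exactly, and your reduction pipeline for $\LocalAW\subseteq\LocalW{2}$ --- Gaifman normal form, local evaluation of the $r$-local formulas into fresh colors, a definable distance-$\leq 2r$ relation turning each basic local sentence into a scattered-set query --- is also the paper's route (\Cref{lem:aw=w2}). But the proposal stops exactly at the step that carries the whole collapse: you reduce everything to expressing the negation of a basic local sentence (``there is no $2r$-scattered set of $n$ vertices carrying color $C_\chi$'') in $\Sigma_{2,1}$, observe that the standard block-shortening/tuple gadget cannot work because the relevant witnesses are pairwise far apart, and then declare this ``the technical heart'' without supplying it. That is a genuine gap, not a routine detail: without it you only reach $\Sigma_2$, i.e.\ \LocalA{2}, and the equality $\LocalW{2}=\LocalAW$ is not established.

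The paper's solution (\Cref{claim:negation}) does not shorten quantifier blocks at all; it changes the structure being queried. Connect two $C_\chi$-vertices by an auxiliary $\alpha$-edge whenever their distance is at most $2r$, and observe that the nonexistence of a scattered set of size $k$ is equivalent to a conjunction of conditions on the $\alpha$-components: each has $\alpha$-diameter below $(2r+1)k$, there are fewer than $k$ of them, and the sum of their individual packing numbers is below $k$. Because the components are then of bounded diameter, a \Local algorithm can precompute each component's packing number and write it onto its vertices as a unary predicate. The resulting sentence existentially guesses one representative per component and uses a \emph{single} universal variable to check that every $\alpha$-vertex is $\alpha$-close to some representative, while the arithmetic condition on the packing numbers becomes a quantifier-free disjunction over the precomputed labels --- $\Sigma_{2,1}$ by construction. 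A second, smaller omission: after eliminating negations you must still fold conjunctions and disjunctions of $\Sigma_{2,1}$-sentences back into a single one; disjunction is not syntactically closed, and the paper needs an explicit two-extra-witness trick ($w_1=w_2$ versus $w_1\neq w_2$) for it. Your deferral of \textsc{Clique Domination}~$\notin\LocalW{1}$ to a separate lemma is acceptable in itself, since the paper does the same (\Cref{lem:w1-w2}), but combined with the missing component argument the proposal as written establishes neither of the two nontrivial assertions of the theorem.
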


The first inclusion is strict thanks to \cref{lem:mis-not-fpt} and \cref{example:w2}.
The other statements of \cref{thm-local-hierarchy} are proved by \cref{lem:aw=w2}
and \cref{lem:w1-w2}.

\begin{lemma}\label{lem:aw=w2}
$\LocalAW =\LocalW{2}$.
\end{lemma}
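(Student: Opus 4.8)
The plan is to prove the two inclusions separately and to use transitivity of $\le_{\Local}$ (\cref{lem:reductions-transitive}) to reduce each of them to a single statement about model-checking problems. The inclusion $\LocalW{2}\subseteq\LocalAW$ is routine: it suffices to show $\textsc{DMC}\text{-}\Sigma_{2,1}\le_{\Local}\textsc{DMC}\text{-}\FO$, and since $\Sigma_{2,1}\subseteq\FO$ this is witnessed by an essentially trivial constant-round reduction (from $k$ every node recovers the unique formula $\phi$ with $\mathrm{num}(\mathrm{enc}(\phi))=k$, keeps $(G,k)$ if $\phi\in\Sigma_{2,1}$, and otherwise outputs a fixed negative \FO-instance). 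All the work lies in the reverse inclusion $\LocalAW\subseteq\LocalW{2}$, for which, again by \cref{lem:reductions-transitive}, it suffices to exhibit a \Local reduction $\textsc{DMC}\text{-}\FO\le_{\Local}\textsc{DMC}\text{-}\Sigma_{2,1}$.

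First I would put the input formula into \emph{Gaifman normal form}. From the parameter $k$ alone every node computes the same sentence $\phi$ together with an equivalent Boolean combination of \emph{basic local sentences} $\lambda_j=\exists x_1\dots\exists x_{m_j}\,(\bigwedge_{l<l'}\dist(x_l,x_{l'})>2r_j\wedge\bigwedge_l\psi_j(x_l))$, where each $\psi_j(x)$ is $r_j$-local. Because the radii $r_j$ and the local formulas $\psi_j$ depend only on $k$, the locality can be pre-evaluated cheaply: in $\max_j r_j\le f(k)$ rounds each node $v$ learns the relevant balls and computes the truth of $\psi_j(v)$, which I store as a new unary predicate $Q_j$. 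I also add a binary predicate $N_j$ with $N_j(u,v)\Leftrightarrow\dist_G(u,v)\le 2r_j$; each such edge of $G'$ is routed along a path of length at most $2r_j\le f(k)$, which is admissible precisely because \Local reductions may have unbounded congestion. After this preprocessing $\lambda_j$ says exactly that $G'$ contains $m_j$ vertices satisfying $Q_j$ that are pairwise non-adjacent in $N_j$. It then remains to convert the fixed Boolean combination over the predicates $Q_j,N_j$ into a single $\Sigma_{2,1}$-sentence. I would first observe that $\Sigma_{2,1}$ is closed under conjunction (merge the leading existential blocks and conjoin the matrices under a common $\forall y$) and under disjunction (add constant selector vertices $c_1,\dots,c_p$ carrying marker predicates $R_a$, existentially guess a selector $s$, and write the matrix as $\bigvee_a(R_a(s)\wedge\chi_a)$, which collapses to the chosen disjunct). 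Hence it suffices to place every literal in $\Sigma_{2,1}$. A positive $\lambda_j$ is immediate: it is the $\Sigma_{1,1}$-sentence $\exists x_1\dots\exists x_{m_j}\,(\bigwedge_l Q_j(x_l)\wedge\bigwedge_{l<l'}\neg N_j(x_l,x_{l'}))$, i.e.\ exactly the multicolored-independent-set pattern already used in \cref{example:w2}, and $\Sigma_{1,1}\subseteq\Sigma_{2,1}$.

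The main obstacle, which I expect to be where the real difficulty lies, is the \emph{negated} basic local sentences $\neg\lambda_j$: they assert that there is \emph{no} set of $m_j$ pairwise-$N_j$-non-adjacent $Q_j$-vertices. Written directly this is a $\Pi_1$-sentence whose universal block has length $m_j\ge 2$, which is not a $\Sigma_{2,1}$-sentence, so the universal block must be compressed to a single universal variable. The natural device is the tuple-contraction of \cref{example:clique-ds}, where each forbidden configuration is represented by one vertex over which the single $\forall y$ ranges; this is precisely how \textsc{Clique Domination} was placed in $\LocalW{2}$. The difficulty is that there the forbidden configurations are $\ell$-cliques, hence local (diameter $1$), whereas a witnessing $m_j$-tuple here is \emph{scattered} across the whole graph, and a tuple-vertex based at one of its components can neither be joined to its remaining, arbitrarily distant components by the bounded-radius paths a \Local reduction permits, nor have its non-scatteredness precomputed in $f(k)$ rounds. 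Resolving this is the crux. One tempting route is to replace the packing bound ``at most $m_j-1$ pairwise-far $Q_j$-vertices'' by an existential \emph{covering} statement ``there are at most $m_j-1$ centres whose balls cover all $Q_j$-vertices'', which fits the existential block and whose covering condition is checkable by the single $\forall y$ through the $N_j$-edges; the subtlety to overcome is the factor-$2$ gap between the packing radius and the covering radius, and I expect the actual proof to close this gap by a dedicated gadget that certifies non-scatteredness through bounded-radius connections only, thereby exploiting the full strength of \Local reductions (unbounded congestion together with a $|G|^{f(k)}$-size blow-up). Once each negated literal is in $\Sigma_{2,1}$, the closure properties above assemble the whole sentence $\psi$, and the bounds $s(k)=\max_j m_j$, radius $\max_j 2r_j$, unbounded congestion, $t(k)=\max_j r_j$, and $p(k)=|\psi|$ certify that the construction is an admissible \Local reduction with $G\models\phi\Leftrightarrow G'\models\psi$.
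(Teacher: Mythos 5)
Your overall architecture --- Gaifman normal form, local precomputation of the $r$-local predicates and of the ``distance at most $2r$'' adjacency, direct $\Sigma_{1,1}$ expression of the positive basic local sentences, and closure of $\Sigma_{2,1}$ under conjunction and disjunction --- matches the paper's proof step for step (the paper's disjunction gadget in \cref{cla:disjunctions} uses two extra existential witnesses $w_1,w_2$ and the test $w_1=w_2$ rather than marked selector vertices, but both devices work). However, at the point you yourself identify as the crux, the negated basic local sentences, your proposal stops at a conjecture: you observe that ``there is no $2r$-scattered set of $m$ marked vertices'' is $\Pi_1$ with a long universal block, propose replacing the packing statement by a covering statement, worry about the factor-$2$ packing/covering gap, and then defer to ``a dedicated gadget''. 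That gadget is exactly the content of the paper's \cref{claim:negation}, and it is not a duality argument at all, so the gap you worry about never has to be closed.

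The missing idea is structural: if $\neg\lambda$ holds, i.e.\ there is no $2r$-scattered set of $m$ marked ($\alpha$-)vertices, then the marked vertices split into fewer than $m$ connected components under the ``distance $\le 2r$'' adjacency, and \emph{each such component has diameter less than $(2r+1)m$ with respect to that adjacency} --- otherwise a shortest path inside a component would itself yield a scattered set of size $m$ by taking every $(2r+1)$-st vertex. Consequently each component lies inside a ball of bounded radius in $G$, so a \Local algorithm can compute by brute force the exact maximum scattered-set size $k_i$ of each component and attach it to the component's vertices as a unary predicate. The $\Sigma_{2,1}$ sentence then existentially guesses at most $m-1$ component representatives, uses its \emph{single} universal variable to assert that every marked vertex is within bounded $\alpha$-distance of some representative (which simultaneously certifies ``few components'' and ``bounded diameter''), and asserts $\sum_i k_i<m$ as a finite disjunction over the precomputed predicates. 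Correctness in both directions rests on the observation that a maximum scattered set is the disjoint union of maximum scattered sets of the components, since distinct components are automatically at distance greater than $2r$. Without this step your reduction does not go through, so the proposal as written has a genuine gap at the lemma's essential point.
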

\begin{proof}
By a classical theorem of Gaifman~\cite{gaifman1982local},
every sentence~$\phi$ of first-order logic is equivalent
to a computable Boolean combination of sentences of the form
\[\exists x_1\ldots \exists x_s \Big( \bigwedge_{1\leq i\leq s}
\alpha^{(r)}(x_i) \wedge \bigwedge_{1\leq i<j\leq s}
\mathrm{dist}(x_i,x_j)>2r\Big),\]
where $s\leq k+1$ if $k$ is the quantifier-rank of $\phi$,
$r\leq 7^k$, and $\alpha^{(r)}(x)$ is an $r$-local property
of~$G$, i.e., its truth depends only on the isomorphism type
of the $r$-neighborhood of the free variable $x$ in $G$.
A sentence of the above form is called a \emph{basic local
sentence}.

Now, any problem $\PPP$ in \Local-\textsf{AW}$[\star]$ reduces
via a \Local reduction to the model-checking problem for a
first-order sentence $\phi$. We translate $\phi$ into the
Boolean combination of sentences as described above.
By a local reduction we now compute a new graph $H$ and
a sentence $\psi\in \Sigma_{2,1}$ such that
$G\models\phi\Leftrightarrow H\models \psi$. We proceed by a
chain of reductions, which can be combined to the desired
reduction by \cref{lem:reductions-transitive}. We first show
how to handle negations of basic local sentences.

\begin{claim}\label{claim:negation}
For every graph $G$ and basic local sentence $\phi$
we can compute a graph $H$ and sentences $\psi_i'\in \Sigma_{2,1}$ for
$0\leq i<k$ such that
$G\models\neg\phi\Leftrightarrow H'\models\bigvee_i\psi_i'$.
\end{claim}
\begin{proof}
Assume $\phi=\exists x_1\ldots \exists x_s \Big( \bigwedge_{1\leq i\leq s}
\alpha^{(r)}(x_i) \wedge \bigwedge_{1\leq i<j\leq s}
\mathrm{dist}(x_i,x_j)>2r\Big)$. We evaluate for every vertex~$a$ of $G$ whether
$\alpha^{(r)}(a)$ holds in $G$. This is possible, as $\alpha^{(r)}$
is only a local property that can be evaluated in the \Local
model by brute force. We assign to every vertex
$a$ for which $\alpha^{(r)}(a)$ holds the color $P_\alpha$. We call
a vertex with color $P_\alpha$ and \emph{$\alpha$-vertex}. We now
compute a new set of edges, which we call \emph{$\alpha$-edges}.
We connect two $\alpha$-vertices by an $\alpha$-edge if their distance
is at most $2r$. The set of $\alpha$-edges is obviously also computable
by a local algorithm. An \emph{$\alpha$-component} is a connected component in
the graph induced by the $\alpha$-edges. We claim that there does not
exist a set of $k$ $\alpha$-vertices with pairwise distance greater than
$2r$ in $G$ if and only if
\begin{enumerate}
\item every $\alpha$-component has diameter (with respect to $\alpha$-edges)
smaller than $(2r+1)k$,
\item there exist at most $\ell<k$ $\alpha$-components $C_1,\ldots, C_\ell$, and
\item if $k_i$ denotes the size of a largest subset of $\alpha$-vertices from
$C_i$ with pairwise
distance greater than $2r$, then $\sum_{1\leq i\leq \ell}k_i<k$.
\end{enumerate}

First assume that there does not
exist a set of $k$ $\alpha$-vertices with pairwise distance greater than
$2r$ in $G$. Then
\begin{enumerate}
\item no $\alpha$-component has diameter
at least $(2r+1)k$. Otherwise, there exist two $\alpha$-vertices $a, b$ and
a shortest path of $\alpha$-vertices of length $(2r+1)k$ connecting $a$ and $b$. By choosing every $(2r+1$)rst vertex on the path, we find a set of $k$
$\alpha$-vertices of pairwise distance greater than $2r$, contradicting our
assumption.
\item There exist at most $\ell<k$ $\alpha$-components $C_1,\ldots, C_\ell$.
By definition of $\alpha$-edges, two $\alpha$-vertices of different
$\alpha$-components have distance greater than $2r$. Hence, we can
choose at least one $\alpha$-vertex from every $\alpha$-component
into a set of $\alpha$-vertices at pairwise distance greater than $2r$.
\item Every $\alpha$-vertex belongs to a unique $\alpha$-component
and two $\alpha$-vertices from different components have distance
greater than $2r$. Hence, a maximum set of $\alpha$-vertices of pairwise
distance greater than $2r$ in $G$ consists of maximum sets of
$\alpha$-vertices of pairwise distance greater than $2r$ in the $\alpha$-components, and the last claim follows.
\end{enumerate}

Conversely, assume that conditions 2 and 3 are satisfied. By the same
arguments as above, we conclude that $G$ does not contain a set of
$k$ $\alpha$-vertices with pairwise distance greater than
$2r$ in $G$.

Now compute for every $\alpha$-component $C_i$ the size $k_i$ of a
maximum set of $\alpha$-vertices from~$C_i$ with pairwise distance
greater than $2r$. We obtain the graph $H$ by adding to every $\alpha$-vertex in component $C_i$ the number $k_i$ (as a unary predicate $P_{k_i}$).

We are now ready to translate $\phi$ into a disjunction of
$\Sigma_{2,1}$-sentences $\psi_i'$
over $H$, as claimed.
Let $\psi_0'=\forall y \neg P_\alpha(y)$. For $1\leq i<k$, let

\begin{align*}
\psi_i'=\exists x_1\ldots\exists x_i \forall y\Big(\bigwedge_{1\leq j<j'\leq i} x_j\neq x_{j'}\wedge & \bigwedge_{1\leq j\leq i} P_\alpha(x_j)\wedge \\ \big(P_\alpha(y)\rightarrow
&  \bigvee_{1\leq j\leq i}
\dist_\alpha(y, x_j)< (2r+1)k\big) \wedge \sum_{1\leq j\leq i}k_i<k\Big).
\end{align*}
Here, $\dist_\alpha$ refers to the distance with respect to $\alpha$-edges,
which is definable, and \linebreak \mbox{$\sum_{1\leq j\leq i}k_i<k$} is an abbreviation for the
formula that consists of all disjunctions
of possible predicates representing numbers that sum up to a number smaller
than $k$.

Now $H\models \psi_i'$ if and only if there exist exactly $i$ $\alpha$-components
$C_j$ (represented by the vertex $x_j$, each of diameter (with respect to $\alpha$-edges) smaller than $(2r+1)k$ and such that $\sum_{1\leq j\leq i}k_i<k$. As proved above, $G$ does not satisfy $\phi$ if and only if $H'\models
\psi_i'$ for some $i<k$.
\end{proof}

By translating the formula $\phi$ into conjunctive normal form and then
eliminating negations by the reduction presented in \cref{claim:negation},
it suffices now to show how to translate disjunctions and conjunctions
of $\Sigma_{2,1}$-formulas again into a $\Sigma_{2,1}$-formula.

This translation is straightforward for conjunction. Let $G$ be any graph and
$\phi$ $\psi$ any two $\Sigma_{2,1}$ formula.
Renaming the variables if necessary, we have that
$\phi = \exists \x\ \forall z\ \phi'(\x,z)$ and
$\psi = \exists \y\ \forall z\ \psi'(\y,z)$, where $\psi'$ and $\phi'$ are quantifier free.
We then have that $G\models \phi \wedge \psi$ if and only if
$G\models \exists \x\ \exists \z\ \forall y (\phi'(\x,y) \wedge \psi'(\z,y))$

This translation is not as simple for disjunction. In addition, it requiers
the graph to have at least two vertices. If it not the case, the \Local model
trivially solves our issues.


\begin{claim}\label{cla:disjunctions}
  For any graph $G$ of size at least 2,
  for any two $\Sigma_{2,1}$ formulas:
  $\phi = \exists \x\ \forall z\ \phi'(\x,z)$ and
  $\psi = \exists \y\ \forall z\ \psi'(\y,z)$,
  we have that $G\models (\phi \vee \psi) \leftrightarrow \theta$, where
  \[ \theta := \ \exists \x\ \exists \y\ \exists w_1\ \exists w_2\ \forall z
    \bigg(\Big( (w_1 = w_2) \rightarrow \phi'(\x,z) \Big)
    \wedge \Big( (w_1 \neq w_2) \rightarrow \psi'(\y,z) \Big)\bigg).
  \]
\end{claim}
\begin{proof}
  Assume that $G\models \phi\vee\psi$, and assume first that $G\models \phi$.
  Let $\a$ be the elements of $G$ witnessing that, i.e.~$G\models \forall z \phi'(\a,z)$,
  then let $v$ be any vertex of $G$ and assigne $\y,w_1,w_2$ to this $v$.
  For all $c$ in $G$, we have $G\models \phi'(\a,c)$. We also have
  $G\models (v \neq v) \rightarrow \psi'(\ov,c)$.
  Therefore, $G\models \theta$.

  If we had that $G\models \psi$ only, we would need to assigne $\y$ to the witnessing
  elements, $\x,w_1$ to any element $v$ and $w_2$ to any element $u\neq v$. The conclusion remains the same.

  For the other direction, assume that $H\models \theta$ and let $\a,\b,c_1,c_2$ be the witnesses.
  If $c_1=c_2$, we derive that $G\models \forall z\phi'(\a,z)$ and therefore $G\models \phi$.
  Similarly, if $c_1\neq c_2$, we have $G\models \psi$. 
\end{proof}

\end{proof}

\begin{lemma}\label{lem:w1-w2}
  $\LocalW{1} \subsetneq \LocalW{2}$.
\end{lemma}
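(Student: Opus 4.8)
The inclusion $\LocalW{1}\subseteq\LocalW{2}$ is immediate: every $\Sigma_{1,1}$-sentence is a $\Sigma_{2,1}$-sentence, so the identity is a trivially $(s,r,c,t,p)$-bounded \Local reduction from $\textsc{DMC}\text{-}\Sigma_{1,1}$ to $\textsc{DMC}\text{-}\Sigma_{2,1}$, and monotonicity of $[\cdot]^\Local$ together with \Cref{lem:reductions-transitive} gives the inclusion. For strictness I would exhibit a separating problem, and the natural candidate is \textsc{Clique Domination}. Membership $\textsc{Clique Domination}\in\LocalW{2}$ follows by composing \Cref{example:clique-ds} (namely $\textsc{Clique Domination}\leq_\Local\textsc{Red-Blue Dominating Set}$) with \Cref{example:w2} ($\textsc{Red-Blue Dominating Set}\in\Local\text{-}\mathsf{W}[2]$) and transitivity of reductions. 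The entire content is therefore to prove that $\textsc{Clique Domination}\notin\LocalW{1}$, and the plan is a locality-based indistinguishability argument exploiting that, since $\Sigma_{1,1}=\Sigma_1$, the target sentences are purely existential.

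Suppose for contradiction that $\textsc{Clique Domination}\in\LocalW{1}=[\textsc{DMC}\text{-}\Sigma_1]^\Local$ and fix the two numbers $k,\ell$. By the consistency remark following \Cref{def:reduction}, every instance of parameter $(k,\ell)$ is mapped to the \emph{same} target parameter, hence to a single existential sentence $\phi=\exists x_1\ldots\exists x_{\ell'}\,\psi$ with quantifier-free $\psi$, via a \Local reduction of radius $r$ computed in $t$ rounds, so that $(G,k,\ell)\in\textsc{Clique Domination}\iff G'\models\phi$, where $G'$ is the reduced graph. Here $\ell',r,t$, and thus $\rho:=r+t$, are constants. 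The mechanism is a \emph{locality observation}: since $G'$ is produced by a $t$-round \Local algorithm and each edge of $G'$ is realized by an $\eta$-path of length at most $r$, the vertices of $G'$ stored at an input vertex $v$, their colors, and all incident $G'$-edges are determined by the identifier-labelled ball $B_\rho(v)$ in $G$. Consequently, for any set $W\subseteq V(G')$ with $|W|\le\ell'$, the induced substructure $G'[W]$ — in particular all adjacencies \emph{and} non-adjacencies relevant to $\psi$ — is determined by $\bigcup_{x\in W}B_\rho(\nu(x))$.

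I now build the two instances. I let $G_{\mathrm{yes}}$ be a long cycle threaded through $k$ pairwise far-apart \emph{demand gadgets}, each an $\ell$-clique hung on a long induced tail and arranged so that (i) the gadget cliques are the only $\ell$-cliques in the graph, and (ii) dominating one gadget's clique spends one unit of the budget and cannot serve any other gadget. With budget $k$ this instance is positive, so by correctness $G'_{\mathrm{yes}}\models\phi$, witnessed by some $W$ with $|W|\le\ell'$; then $\nu(W)$ consists of at most $\ell'$ vertices of $G_{\mathrm{yes}}$. Making the cycle long enough that $\le\ell'$ points cannot block every sufficiently long path stretch, I pick a vertex $w$ in the interior of a tail/path region with $\dist(w,\nu(x))>2\rho$ for all $x\in W$. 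I obtain $G_{\mathrm{no}}$ by attaching one further demand gadget to $w$ through a fresh long path, reusing all identifiers of $G_{\mathrm{yes}}$ unchanged; this yields $k+1>k$ demand gadgets, so $G_{\mathrm{no}}$ is negative. The only structural change lies inside $B_\rho(w)$ together with the new branch, both at distance more than $\rho$ from every $\nu(x)$, so each ball $B_\rho(\nu(x))$ is preserved verbatim, identifiers included. By the locality observation $G'_{\mathrm{no}}[W]$ equals $G'_{\mathrm{yes}}[W]$, hence $G'_{\mathrm{no}}\models\psi(W)$ and $G'_{\mathrm{no}}\models\phi$; correctness then declares $G_{\mathrm{no}}$ positive, contradicting its construction. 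This is exactly the flavour of \Cref{lem:mis-not-fpt}, promoted from algorithms to reductions.

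The two delicate points, where I expect the real work, are the following. First is the identifier bookkeeping: a \Local reduction may depend on identifiers, so the witness transfer is sound only because $G_{\mathrm{no}}$ literally contains $G_{\mathrm{yes}}$ with unchanged identifiers and differs solely by the branch at $w$, making every ball disjoint from $B_\rho(w)$ identical on the nose. Second is the gadget design: I must guarantee that the connecting cycle and the tails introduce no spurious $\ell$-cliques and that domination of distinct gadgets can never be shared, so that the number of demand gadgets relative to the budget $k$ controls membership exactly. Once these are pinned down, the remaining locality and existential-preservation reasoning is routine.
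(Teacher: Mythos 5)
Your proposal is correct and follows essentially the same route as the paper: \textsc{Clique Domination} is the separating problem, its membership in $\LocalW{2}$ comes from the same chain of cited results, and non-membership in $\LocalW{1}$ is established by the same kind of indistinguishability argument, attaching an extra clique gadget far from the preimage of a bounded-size witness so that the reduction cannot detect the change. The only notable differences are that you reduce directly to $\textsc{DMC}$-$\Sigma_1$ rather than to \textsc{Multicolored Independent Set} (so you avoid the reliance on the completeness result \Cref{lem:indep_MC} that the paper's proof uses), and that your claim that $G_{\mathrm{yes}}$ and $G_{\mathrm{no}}$ are mapped to the \emph{same} target sentence should be justified by locality together with the requirement that all nodes of one instance agree on $k'$ (nodes far from the modification compute the same value in both instances), not by the consistency remark alone -- a minor, easily repaired imprecision.
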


\begin{proof}
  We prove that the \textsc{Clique Domination} problem is not in $\LocalW{1}$.
  Assume towards a contradiction that there is a \Local-reduction from the \textsc{Clique Domination} problem to the \textsc{Multicolored Independent
  Set} problem. This is enough as the upcoming \cref{lem:indep_MC} shows that this problem is complete for \LocalW{1}.

  Let $(s,r,t,p)$ be the functions given by the reduction and $A$ be the corresponding \Local-algorithm.
  We fix $k=1$, and $\ell=3$, so the parameter is $3+1=4$.
  We define a graph $G$ that is a YES-instance for the $(k,\ell)$-\textsc{Clique Domination} problem as the graph composed of:
  \begin{enumerate}
    \item a node $v$,
    \item $p(4)+2$ triples of nodes $(a_i,b_i,c_i)$,
    \item $p(4)+2$ disjoint paths $P_i$, of length $t(4)$, connecting $v$ to each $a_i$,
    \item the edges $(a_i,b_i)$ and $(a_i,c_i)$ for every $i$, and
    \item the additional edge $b_1,c_1$.
  \end{enumerate}
  Intuitively, $G$ is a tree of depth $t(4)+1$ such that in the reduction $A$, no message can go from a leaf to the root. In the first leaf, $a_1,b_1,c_1$ form the only $3$-clique of the graph.
  Let $G'$ be the graph given when applying the reduction $A$ on $G$.
  Remember that we have a mapping~$\nu$ from the vertices of $G'$ to the vertices of $G$.
  The graph $G'$ must contain a multicolored independent set $I$ of size $p(4)$. We can therefore find an integer $j\le p(4+2)$
  such that $j>1$ and $\nu(I) \cap \{P_j,a_j,b_j,c_j\}=\emptyset$. Intuitively, this means that the $j$th branch of $G$ is not
  ``responsible'' for the creation of $I$.

  We now define $H$ as the copy of $G$ with only one extra edge: $(b_j,c_j)$. This time, $H$ is a no-instance for the
  $(1,3)$-\textsc{Clique Domination} problem. However, if running the reduction
  algorithm $A$, this yields a graph $H'$ that also contains a multicolored independent of size~$p(4)$.
  To see this, look at any vertex $x$ in $G$ such that $x\in \nu(I)$. By definition we have that $\dist(x,a_j)>t(4)$ both in
  $G$ and in $H$. So in $t(4)$ communication rounds, $x$ cannot distinguish between $G$ and $H$, and produces the same reduction.
  Therefore $I$ is also created in $H'$, which is a contradiction.
\end{proof}

\subsection{LOCAL-W[1] complet problems}



Since $\Sigma_{1,1}=\Sigma_1$, we have
$\LocalW{1} = \LocalA{1}$ (just as
in classical parameterized complexity theory). In the light
of the above collapse result it remains (at least for first-order
definable problems) to determine whether they belong to
$\LocalW{1}$. In the remainder of this section we
prove that the $\textsc{Multicolored Independent
Set}$ problem and the $\textsc{Induced Subgraph Isomorphism}$ problem
are complete for $\LocalW{1}$ under \Local reductions.
In classical parameterized complexity theory these problems
are prime examples of \textsf{W}$[1]$-complete problems,
however, the standard reductions do not translate to \Local
reductions, and we have to come up with new reductions.

We immediately have that $\textsc{Multicolored Independent}$
$\textsc{Set}$ $\leq_\Local$ $\textsc{Induced}$ \linebreak
$\textsc{Subgraph Isomorphism}$ $\leq_\Local$ \mbox{$\textsc{MC}$-$\Sigma_{1}$},
 as each one is a specific case of the following one.
We therefore only prove that:
\begin{lemma}\label{lem:indep_MC}
  $\emph{\textsc{MC}}$-$\Sigma_{1} \leq_\Local \emph{\textsc{Multicolored Independent}}$ $\emph{\textsc{Set}}$.
\end{lemma}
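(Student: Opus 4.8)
The plan is to adapt the classical reduction---where each variable $x_i$ of the existential prefix becomes a color class and candidate assignments $x_i=a$ become vertices of the target instance---to the \Local setting, where the decisive new constraint is that every edge of the constructed instance must be realizable by a \emph{short} path in $G$ (of length bounded by $r(k)$): a \Local reduction running in $t(k)$ rounds can only create a conflict between two vertices of $G$ that already lie within distance $t(k)$ of each other. First I would preprocess the formula. Writing $\phi=\exists x_1\cdots\exists x_\ell\,\psi$ with $\psi$ quantifier-free, I note that whether $\psi$ holds under an assignment depends only on its \emph{atomic type}, i.e.\ on the unary atoms $P_i(x_j)$ and the binary atoms $x_i=x_j$ and $E(x_i,x_j)$. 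There are only boundedly many such types (a function of $|\phi|$, hence of $k$), and it suffices to test whether some type satisfying $\psi$ is realizable in $G$.

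The key observation enabling a \Local reduction is that the binary atoms are \emph{short-range}: $x_i=x_j$ and $E(x_i,x_j)$ can hold only for vertices at distance $\le 1$. Hence in any realizing assignment the variables split into clusters along the positive equality/adjacency atoms, each cluster being mapped into a ball of radius $\le\ell$, while across clusters the only possible relations are inequality and non-adjacency. I would therefore build the \textsc{Multicolored Independent Set} instance $G'$ with one color per cluster: the vertices of a cluster's color are its \emph{local realizations}, i.e.\ assignments of the cluster's variables into a radius-$\ell$ ball that realize the prescribed within-cluster atoms. Each realization can be enumerated by \Local computation (a node learns its $\ell$-ball and, to avoid duplicates, hosts only those realizations whose smallest-identifier assigned vertex is itself) and is added to $G'$ as a single vertex, hosted at that center. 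Two realizations from different clusters are joined by a conflict edge exactly when a cross-cluster negative literal is violated, i.e.\ when they place a forbidden pair of variables on adjacent or equal vertices of $G$; by the short-range property such a conflict links hosts at distance $\le 2\ell+1$, so $\eta$ can route it along a short path and the radius stays bounded by $r(k)=O(\ell)$. A colorful independent set then picks one realization per cluster with no violated cross-literal, which is precisely a global assignment realizing the guessed type, giving the desired equivalence; the size, round, and parameter bounds are all functions of $k$ (with unbounded congestion, which \Local permits), and the degenerate case $|V(G)|\le f(k)$ is decided directly in the \Local model.

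The step I expect to be the main obstacle is combining the different guesses---the partition into clusters together with the chosen satisfying type---into a \emph{single} \textsc{Multicolored Independent Set} instance. For a conjunctive coupling the clusters can simply use disjoint color sets, but a \emph{disjunctive} coupling between clusters (e.g.\ $E(x_1,x_2)\vee E(x_3,x_4)$ with the two pairs in different clusters) asks that at least one of several independent sub-instances be solved, and here \textsc{Multicolored Independent Set} behaves fundamentally differently from \textsc{Multicolored Clique}: a clique cannot span a disjoint union, so one may OR sub-instances by taking disjoint copies, whereas an independent set spreads happily across a disjoint union and disjoint copies instead yield an unintended ``split'' solution. Forcing a consistent global choice of disjunct would naively require conflict edges between far-apart vertices, which a \Local reduction cannot create. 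Resolving this---designing a short-range \emph{selector} gadget (in the spirit of the $w_1,w_2$ switch of \cref{cla:disjunctions}, but realizable without a universal quantifier) that confines any colorful independent set to a single disjunct---is the crux, and is exactly the point where the standard reductions fail to localize and a genuinely new construction is needed.
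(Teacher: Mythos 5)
Your construction for a single conjunctive query with negated atoms is sound and is essentially the paper's \cref{lem:indep_sub-graph-iso}: the clusters you form along positive equality/adjacency atoms are exactly the connected components of the pattern, one color per component, one vertex of $G'$ per local realization hosted at its smallest-identifier vertex, and conflict edges only between realizations at distance $O(\ell)$ in $G$. Up to that point you have reproduced the paper's argument.

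However, you then explicitly identify the disjunctive coupling --- a $\Sigma_1$-formula is a disjunction of such conjunctive queries (and, in your variant, also a disjunction over atomic types/cluster partitions) --- as ``the crux'' and leave it unresolved, so the proof as written does not establish the lemma. Your diagnosis of why the naive OR fails is correct (disjoint copies of the sub-instances admit an unintended independent set that splits across copies, and tying the copies together naively needs long-range edges), but the missing selector gadget is precisely the content of the paper's \cref{lem:disjunctions}, and it does exist: given two \Local reductions to \textsc{Multicolored Independent Set} with color palettes $c_1,\dots,c_{p_1}$ and $d_1,\dots,d_{p_2}$, one re-colors with the \emph{product} palette $e_{i,j}$, blowing up each $c_i$-colored node into $p_2$ nodes colored $e_{i,1},\dots,e_{i,p_2}$, each $d_j$-colored node into $p_1$ nodes colored $e_{1,j},\dots,e_{p_1,j}$, and each edge into a complete bipartite graph between the blow-ups. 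A multicolored independent set on all $p_1p_2$ colors must, for every pair $(i,j)$, hit the blow-up of some $c_i$-node or of some $d_j$-node (else color $e_{i,j}$ is missed), which forces it to project to a full multicolored independent set on one side; this is entirely per-node and per-edge, hence a \Local reduction of radius $O(1)$ on top of the given ones. Without this (or an equivalent) gadget, your argument only covers the purely conjunctive fragment of $\Sigma_1$ and the lemma does not follow.
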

The key part of the proof of this lemma is to deal with disjunctions.
More precisely, fix a~$\Sigma_{1}$-formula expressing that there is a
red-blue independent set or a green-yellow one. In a classical parameterized
reduction we could just create two copies of the graph. In one copy we would
``un-color'' all nodes but those in red or blue. In the second one we would
un-color all but those in green or yellow. Then, we would draw a complete
bipartite graph between the two copies. There is a multicolored independent
set of size two in this new structure if and only if the original
$\Sigma_{1}$ formula was satisfied.

The issue with this approach is that this is not a \Local reduction due the
the bipartite connections between the copies, making the radius of the
reduction arbitrarily large. Hence, we have to come up with a new way of
dealing with disjunctions.

\begin{lemma}\label{lem:disjunctions}
Let $\PPP_1$ and $\PPP_2$ be parameterized problems with
$\PPP_1\leq_\Local \textsc{Multicolored} \linebreak \textsc{Independent Set}$ and
$\PPP_2\leq_\Local \emph{\textsc{Multicolored Independent Set}}$.\newline
Then also
$\PPP_1\cup\PPP_2 \leq_\Local \emph{\textsc{Multicolored Independent Set}}$.

\end{lemma}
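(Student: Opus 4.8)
The plan is to follow the classical two-copy reduction but replace its non-local ingredient by a gadget that exploits the connectedness of the host graph. First I would run the two given reductions on the input $(G,k)$, obtaining multicolored independent set instances $(G_1',k_1')$ and $(G_2',k_2')$ realized on the common host $G$ via mappings $\nu_1,\eta_1$ and $\nu_2,\eta_2$ as in \cref{def:reduction}, with $(G,k)\in\PPP_i \Leftrightarrow G_i'$ has a full multicolored independent set. Since $k_1'\leq p_1(k)$ and $k_2'\leq p_2(k)$ are computable functions of $k$, every node can compute both, so I may relabel both color palettes onto a common set $\{1,\dots,m\}$ with $m=\max(k_1',k_2')$, padding the smaller instance with isolated dummy vertices of the missing colors (a full multicolored independent set of the padded copy exists iff one exists in the original, as the dummies fill the extra colors for free). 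Taking the disjoint union of the two relabeled copies with target $m$ already gives \emph{completeness}: a full solution in either copy is a full multicolored independent set of the combined instance.

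The difficulty is \emph{soundness}: in the disjoint union with shared colors, a solution may be \emph{mixed}, taking some colors from copy~$1$ and the rest from copy~$2$, which corresponds to no solution of either $\PPP_1$ or $\PPP_2$. The classical fix adds a complete bipartite graph between the two copies to force all chosen vertices into a single copy, but—exactly as noted before the lemma—its edges join host-far vertices and so its radius is unbounded, so it is not a \Local reduction. My plan is therefore to simulate the exclusion effect by a local \emph{mode/selector} gadget: introduce a bounded number of extra ``mode'' colors encoding a global choice of copy, and wire the consistency constraints only between co-located fragments and between fragments at host-adjacent vertices, so that every gadget edge maps under $\eta$ to a path of length at most~$1$ and the reduction radius stays bounded. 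The point is that $G$ is connected, so purely local consistency constraints, chained along paths of $G$, should force a single global mode; I would then prove soundness by showing that any combined multicolored independent set induces a globally constant mode, whence all of its real chosen vertices lie in one copy and yield a full solution of $G_1'$ or $G_2'$.

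The hard part will be to enforce this single consistent global choice with \emph{bounded radius} while keeping the number of colors bounded by a function of $k$: the naive ``one selector vertex forbids the entire other copy'' is precisely a star of long edges, and a per-host-vertex mode bit seems to demand one fresh color per vertex, which exceeds the budget $k'\leq p(k)$. I expect the genuine work to be in designing a bounded-palette propagation gadget and arguing, by induction along paths in connected $G$, that local agreement of the mode forces global agreement of all color picks onto one copy. I would also note that the witness-pair device of \cref{cla:disjunctions} does not port directly, because \textsc{Multicolored Independent Set} has no universal quantifier with which to broadcast the chosen disjunct; the only enforcement mechanism is adjacency, which must remain short. Finally I would verify the $(s,r,c,t,p)$-bounds of \cref{def:reduction}—the blow-up is polynomial, the radius is the maximum of the two input radii plus a constant, the round complexity is $t_1(k)+t_2(k)+O(1)$, and $k'\leq m+O(1)\leq p(k)$—and conclude, exactly as in the bookkeeping of \cref{lem:preservation-reduction} and \cref{lem:reductions-transitive}, that the composite is a valid \Local reduction witnessing $\PPP_1\cup\PPP_2\leq_\Local \textsc{Multicolored Independent Set}$.
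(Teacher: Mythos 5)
You correctly diagnose why the classical two-copy reduction fails locally, and you correctly observe that the witness-pair trick of \cref{cla:disjunctions} does not port. But the proposal then defers the entire difficulty to an unconstructed ``bounded-palette mode/selector gadget'' whose job is to \emph{forbid} mixed solutions by propagating a global copy-choice along paths of $G$, and this is precisely where the plan breaks down. In a \textsc{Multicolored Independent Set} instance with $m\leq p(k)$ colors, a solution selects exactly $m$ vertices, and the only enforcement mechanism is pairwise non-adjacency among those $m$ picks. Any ``chain of local consistency constraints'' must therefore be realized by conflicts among a set of at most $p(k)$ picked vertices, each conflict spanning host distance at most the reduction radius $r(k)$; the total reach of any such implication chain is bounded by a function of $k$ alone, while the two real picks you need to separate can lie at host distance $\Theta(n)$. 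So a bounded palette cannot broadcast a global mode bit across the host graph, and the soundness strategy ``every solution has a globally constant mode'' cannot be implemented. You half-identify this obstruction yourself but leave it unresolved, so the proof has a genuine gap at its central step.

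The paper's proof avoids the issue by \emph{not} forbidding mixed solutions. It takes the product of the two palettes: with colors $c_1,\dots,c_{p_1(k)}$ on $G_1$ and $d_1,\dots,d_{p_2(k)}$ on $G_2$, it introduces colors $e_{i,j}$, replaces each color-$c_i$ node of $G_1$ by copies colored $e_{i,1},\dots,e_{i,p_2(k)}$ and each color-$d_j$ node of $G_2$ by copies colored $e_{1,j},\dots,e_{p_1(k),j}$, blowing up each original edge into a complete bipartite graph between the corresponding copy sets (all co-located, so the radius is unchanged), and asks for a multicolored independent set of size $p_1(k)\,p_2(k)$. Mixed solutions are allowed, but a counting argument shows any full solution must either hit the blow-up of every $G_1$-color or the blow-up of every $G_2$-color (otherwise some color $e_{i,j}$ is uncovered), and projecting back yields a full multicolored independent set in $G_1$ or in $G_2$. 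This palette-product idea is the missing ingredient; without it, or some equally non-obvious replacement, the proposal does not go through. (The paper also adds uncolored connector nodes to keep the output instance connected, a detail the proposal omits.)
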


\begin{proof}[Proof of \cref{lem:disjunctions}]
  Let $\PPP_1$ and $\PPP_2$ be two problems satisfying the requirements
  of the Lemma. Let $(s_1,r_1,c_1,t_1,p_1)$ (resp. 
  $(s_2,r_2,c_2,t_2,p_2)$) be the functions attesting that $\PPP_1$
  (resp.~$\PPP_2$) reduces to the $\textsc{Multicolored Independent Set}$
   problem. We define $s \coloneqq s_1+s_2$, $r\coloneqq \max(r_1,r_2)$, $t\coloneqq \max(t_1,t_2)$,
   and $p\coloneqq  p_1p_2$, and show that there is a \Local reduction that is
   $(s,r,t,p)$ bounded from $\PPP_1\mathbin{\cup} \PPP_2$ to the
   $\textsc{Multicolored Independent}$ $\textsc{Set}$ problem.

  Let $(G,k)$ be an instance of $\PPP_1\cup \PPP_2$. We compute $G'$ as follows. First, compute $(G_1,\nu_1)$ and $(G_2,\nu_2)$ given by the reductions from $\PPP_i$ to the $\textsc{Multicolored Independent Set}$ problem (for $i\in\{1,2\}$). We have that $G_1$ contains at most $p_1(k)$ different colors and name them $c_{1},\ldots,c_{p_1(k)}$, and $G_2$ contains at most $p_2(k)$ different colors and name them $d_{1},\ldots,d_{p_2(k)}$.

   We then introduce $p_1(k)p_2(k)$ new colors named $(e_{i,j})$  for $1\le i \le p_1(k)$ and $1\le j \le p_2(k)$. We then change $G_1$ and~$G_2$ in the following way. For every node $u$ in $G_1$ of color~$c_i$, we replace it with $p_2(k)$ new nodes, each of them with a distinct color among $\{e_{i,1},\ldots e_{i,p_2(k)} \}$.
   Similarly, for every node $v$ in $G_2$ of color $c_j$, we replace it with $p_1(k)$ new nodes, each of them with a distinct color among $\{e_{1,j},\ldots e_{p_1(k),j} \}$. Additionally, for every edge $\{u,v\}$ in either $G_1$ or $G_2$, we draw a complete bipartite graph between the nodes created by these two blow ups.

   We now claim that the obtained graph contains a multicolored independent set of size~$p(k)$ if and only if $G_1$ contains a multicolored independent set of size $p_1(k)$ or $G_2$ contains a multicolored independent set of size $p_2(k)$, and therefore if and only if $(G,k)$ belongs to $\PPP_1\cup \PPP_2$. It should be clear that if $G_i$ contains a multicolored independent set of size $p_i(k)$, then the obtained graph contains one of size $p(k)$.

   Assume now that this final graph contains a multicolored independent set of size $p(k)$. If there is an $i\le p_1(k)$ and a \mbox{$j\le p_2(k)$} such that this multicolored independent set contains no node obtained by the blow up of a node of color $c_i$ nor by the blow up of a node of color~$d_j$, then this set does not contain a node of color $e_{i,j}$.
   Since this set must contain a node of each color, either for all $i\le p_1(k)$ the set contains a node obtained by the blow up of a node in $G_1$ of color $c_i$, which yields a multicolored independent set of size $p_1(k)$ in $G_1$, or for all $j\le p_2(k)$ the set contains a node obtained by the blow up of a node in $G_2$ of color $d_j$, which yields a multicolored independent set of size $p_2(k)$ in $G_2$.

   The only missing part to make this an $(s,r,t,p)$-bounded \Local reduction is that the obtained graph is not connected yet. To do so, we assume that the reduction from $\PPP_1$ (resp.~$\PPP_2$) to the $\textsc{Multicolored Independent Set}$ problem has the extra property that for all $v$ in $G$, there is a node $u$ in $G_1$ (resp. $G_2$) with $\nu_1(u)=v$ (resp. $\nu_2(u)=v$). If that is the case, then the final step of our \Local reduction is to add, for every $v$ in $G$, an edge between any two nodes $u_1,u_2$, with $\nu_1(u_1)=\nu_2(u_2)=v$.

   This additional property can be enforced for a \Local reduction to the $\textsc{Multicolored}$ $\textsc{Independent Set}$. Indeed, an instance of this problem is not impacted by creating new uncolored nodes and adding edges between two nodes if one of them is uncolored. In this case, we can slightly change the \Local reduction from $\PPP_i$ to the $\textsc{Multicolored Indepen}$-
   $\textsc{dent Set}$ problem. After $G_i$ has been computed, for every node $v$ in $G$, we create one new uncolored node $v_i$. We then add edges connecting $v_i$ to all nodes $u$ in $G_i$ such that $\nu_i(u)=v$ and connecting $v_i$ to $u_i$ for every edge $(u,v)$ in~$G$.
 \end{proof}

The last ingredient that we need is the following lemma.
\begin{lemma}\label{lem:indep_sub-graph-iso}
  ~\newline$\emph{\textsc{Induced Subgraph Isomorphism} }
  \leq_\Local$  $\emph{\textsc{Multicolored Independent Set}}$.
\end{lemma}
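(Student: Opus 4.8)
The plan is to bypass the textbook $\textsc{Multicolored Independent Set}$-completeness reduction and instead exploit the locality of connected patterns. Recall the classical construction: for a pattern $H$ on vertices $u_1,\dots,u_k$ one introduces, for every $i$ and every $v\in V(G)$, a candidate vertex $(i,v)$ of color $i$, and joins two candidates $(i,v)$, $(j,v')$ by an edge of the target instance exactly when they are incompatible with an induced embedding. This is precisely where the reduction fails to be \Local: whenever $\{u_i,u_j\}\in E(H)$ but $\{v,v'\}\notin E(G)$ the pair is incompatible, so we would have to connect the far-apart vertices $v$ and $v'$, forcing $\eta$ to route along arbitrarily long paths. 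Non-adjacency constraints, by contrast, are harmless: if $\{u_i,u_j\}\notin E(H)$ then the forbidden pairs are the \emph{adjacent} ones, which route along a single edge of $G$.

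The key idea is therefore to resolve every edge constraint locally, component by component, leaving only non-edge constraints for the target graph. Write $H=C_1\sqcup\dots\sqcup C_p$ for the connected components of the pattern (so $p\le k$) and fix an anchor vertex in each $C_q$. Since $C_q$ is connected on at most $k$ vertices and induced embeddings map edges to edges, the image of any induced copy of $C_q$ is connected and hence contained in the radius-$k$ ball $B_k(v)$ around the image $v$ of its anchor. First I would let every node $v\in V(G)$ learn $B_k(v)$ in $k$ rounds and, using the unbounded local computation available in the \Local model, enumerate by brute force all induced (color-respecting) embeddings $\phi$ of each $C_q$ into $B_k(v)$ that send the anchor of $C_q$ to $v$. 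For every such $\phi$ I create one target vertex $W=(q,v,\phi)$ of color $q$, with $\nu(W)=v$. Thus all intra-component adjacency \emph{and} non-adjacency requirements are already discharged inside the witnesses, and the target instance will ask for an independent set of size $p$ with one vertex of each of the $p$ colors.

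It remains to enforce the inter-component constraints, and these are exactly the benign ones. Between two witnesses $W=(q,v,\phi)$ and $W'=(q',v',\phi')$ with $q\neq q'$ I add a target edge precisely when the images $\phi(C_q)$ and $\phi'(C_{q'})$ overlap or are joined by an edge of $G$; this rules out the non-injective and non-induced combinations (distinct components of $H$ carry no edges, so their images must be disjoint and pairwise non-adjacent). Crucially, such a conflict can occur only when some vertex of $\phi(C_q)$ lies within distance $1$ of some vertex of $\phi'(C_{q'})$, which forces $\dist(v,v')\le 2k+1$; hence every target edge routes along a path of length $\Oof(k)$ in $G$, giving radius $r(k)=\Oof(k)$ and $t(k)=\Oof(k)$ rounds. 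A choice of one witness per color avoiding all these edges is exactly a family of induced copies of the components that are pairwise disjoint and non-adjacent, i.e.\ an induced copy of $H$; conversely any induced copy of $H$, restricted to its components, yields such a family (each restricted anchor image lies in the appropriate ball, and inducedness rules out every conflict edge). So the target instance has a multicolored independent set of size $p$ iff $G$ contains an induced copy of $H$. Together with $\nu$, the routing maps, target size $|G'|\le n^{\Oof(k)}$, and (unbounded, as permitted for \Local) congestion, this is a valid $(s,r,t,p)$-bounded \Local reduction in the sense of \Cref{def:reduction}.

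I expect the main obstacle to be precisely this separation of edge constraints from non-edge constraints: the naive reduction is doomed because edge constraints demand long-range forbidding edges, and the whole point is to observe (i) that connectivity confines every edge constraint to a ball of radius $k$, so it can be absorbed into the witnesses by local brute force, and (ii) that the only surviving inter-component constraints are non-adjacency and disjointness, whose violations are automatically local and hence routable within bounded radius. Checking that this bookkeeping produces a genuine induced embedding in both directions, and that the size, radius, and round bounds meet \Cref{def:reduction}, is then routine.
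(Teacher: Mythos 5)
Your construction is essentially the paper's own reduction: one color per connected component of $H$, one target vertex per locally enumerated induced copy of a component (the paper indexes by the copy $B$ rather than by the embedding $\phi$, an inessential difference), conflict edges for overlapping or adjacent copies, and the observation that connectivity of each component confines everything to radius $\Oof(|H|)$ while the congestion stays unbounded. The argument is correct and matches the paper's proof.
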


\begin{proof}[Proof of \cref{lem:indep_sub-graph-iso}]
  Let $(G,H)$ be an instance of the $\textsc{Induced Subgraph}$
  $\textsc{Isomorphism}$ problem. The task is to determine whether
  $G$ contains an induced subgraph isomorphic to $H$. The parameter is
  $|H|$. We construct an instance $(G',k')$ of the
  $\textsc{Multicolored}$ $\textsc{Independent Set}$ problem as follows:

  Let $k'$ be the number of connected components of $H$. For each
  connected component $C$ of $H$, and for each induced subgraph $B$
  of $G$ isomorphic to $C$, we create a new node $v_{B,C}$. The vertices
  of $G'$ are all vertices $v_{B,C}$. We create an edge between $v_{B,C}$ and
  $v_{B',C'}$ if and only if $B$ intersects $B'$ or if there is an edge in $G$
  between a vertex of $B$ and a vertex of~$B'$.
  We use~$k'$ colors to color $G'$: each vertex $v_{B,C}$ gets color ``$C$''.

  The existence of an induced subgraph of $G$ isomorphic to~$H$ is
  equivalent to the existence of a multicolored independent set of
  size $k'$ in $G'$. We only have to check that this is indeed a \Local
  reduction.

  We define $\nu \colon V(G')\to V(G)$ as the mapping that
  assigns $v_{B,C}$ to the vertex of $B$ with the smallest identifier.
  By running a \Local algorithm for $|H|$ rounds, every vertex $u\in G$
  can detect whether it is the vertex with smallest identifier of a subgraph
  $B$ that is isomorphic to a component $C$ of $H$.
  Note also that for any edge $\{v_{B,C}, v_{B',C'}\}$ of
  $G'$, we have that $B'$ is included in the $2|H|$ neighborhood of
  $\nu(v_{B,C})$. Therefore, we can compute
  by a \Local algorithm in $2|H|$ rounds also all edges of $G'$.

  To summarize, we have that $|G'|\le |G|^{|H|}$, the radius of the
  reduction and the number of rounds are $2|H|$, and the new
  parameter $k'$ is smaller then $|H|$, making this a
  \Local reduction. However the congestion is unbounded, so
  this is not a \Congest reduction.
\end{proof}

With \Cref{lem:disjunctions} and \Cref{lem:indep_sub-graph-iso} proved, \Cref{lem:indep_MC} follows quite easily.

\begin{proof}[Proof of \Cref{lem:indep_MC}]
All $\Sigma_{1}$-formulas can be expressed as a disjunction of conjunctive queries with possibly negated atoms. The model checking of such queries are special cases of the $\textsc{Induced Subgraph Isomorphism}$ problem. By \Cref{lem:indep_sub-graph-iso}, this reduces to the $\textsc{Multicolored}$
 $\textsc{Independent Set}$ problem. The conclusion is a straightforward induction on the size of the disjunction, each step being solved by \Cref{lem:disjunctions}.
\end{proof}


\section{Kernelization}\label{sec:kernel}

We now turn our attention to \emph{distributed kernelization}.
Kernelization is a classical approach in parameterized complexity
theory to reduce the size of the input instance in a polynomial time
preprocessing step.
It is a classical result
of parameterized complexity that a problem is fixed-parameter
tractable if and only if it admits a kernel.
We give two definitions of distributed kernelization and study
their relation to fixed-parameter tractability.

\begin{definition}
A \Distributed kernelization algorithm is a
\Distributed algorithm that on input $(G,k)$ of a parameterized
problem $\PPP$ computes in $f(k)$ rounds an
equivalent instance $(G',k')$ of order at most $g(k)$ for
computable functions~$f,g$.
\end{definition}

Here, the graph $G'$ is represented in $G$ as in a
\Distributed reduction.
Obviously, if a
problem admits a \Distributed kernelization, then it lies
in \Distributed-\FPT. The converse however, depends on the
model of computation.



\begin{example}
The problem $D_d$ whether a graph contains a vertex of degree greater than
$d>3$ does not admit a \Local (and hence also not
a \Congest) kernelization parameterized by $d$.
\end{example}
\begin{proof}
Assume that there exists
a \Distributed kernelization algorithm $\mathcal{A}$
for $D_d$ which on input $(G,d)$
computes in~$f(d)$ rounds an equivalent instance $(G',d')$ of order at
most~$g(d)$. We consider the value $d=4$ and the following family of
graphs. The graph $G_n^0$ is a path on $n$ vertices $\{v_1,\ldots, v_n\}$,
where additionally we attach vertices $x_1, x_2$ to~$v_1$
and $y_1,y_2$ to $v_n$.
The graph $G_n^1$ is like$G_n^0$ but we additionally
attach one more vertex $x_3$ to $v_1$, $G_n^2$ is like $G_n^0$ but we additionally attach one more
vertex $y_3$ to $v_n$, and $G_n^3$ is like $G_n^0$ but we additionally
at both ends we attach~$x_3$ and~$y_3$. The ids of vertices from
$G_n^0,G_n^1,G_n^2$ are equal to those of $G_n^3$ restricted to
the respective domain. The instances $G_n^0$ are negative instances
of $D_4$, while $G_n^1, G_n^2$ and $G_n^3$ are positive instances.

Now consider the execution of $\mathcal{A}$ on $G_n^3$ for large $n$.
It produces an equivalent instance $(H_n^3, c)$ for some $c\in \N$
of order at most $g(4)$, which is represented by mappings $\nu$ and $\eta$
in $G$. As $H_n^3$ is connected, $\nu(V(H_n^3))$ is a connected
subgraph of order at most $g(4)$ of $G_n^3$. In fact, we may assume
that $\nu(V(H_n^3))=v_i$ for some vertex $v_i\in V(G_n^3)$,
as $\mathcal{A}$ is a local algorithm that can collect all local
information in a single node. Observe that $H_n^3$ depends only on
the $k\coloneqq g(4)+f(4)$-neighborhood of $v_i$, as $\mathcal{A}$
can access only information
about~$g(4)$ vertices around $v_i$ that it collects in
$f(4)$ rounds. We distinguish two cases.

First case: $k<i<n-k$. We now consider the execution of $\mathcal{A}$ on
$G_n^0$. As the $k$-neighborhoods of $v_i$ are isomorphic (with ids)
in $G_n^0$ and $G_n^3$, the produced graph $H_n^0$ must be isomorphic
to $H_n^3$ and also be mapped to the vertex $v_i$, and the produced
parameter $c$ is equal to the parameter produced for $H_n^3$. But this
is a contradiction to the fact that $G_n^0$ is a negative instance of $D_4$.

Second case: $i\leq k$ (the case $i\geq n-k$ is analogous). We consider
the execution of $\mathcal{A}$ on $G_n^2$ (or on $G_n^1$ in
case $i\geq n-k$). Because the
$k$-neighborhoods of all vertices $v_j$ for $j\geq k$ are isomorphic (with ids)
in $G_n^2$ and $G_n^3$, and $\mathcal{A}$ on~$G_n^3$ mapped~$H_n^3$ to
$v_i$ with $i\leq k$, $\mathcal{A}$ on $G_n^2$ must also map the
positive instance~$H_n^2$ to some $v_j$ with $j\leq k$. But then we
consider the execution of $\mathcal{A}$ on $G_n^0$. With the same
argument as above, $\mathcal{A}$ must produce the same instance
$H_n^0\cong H_n^2$ and map it to the same vertex $v_j$ on
which $\mathcal{A}$ mapped the instance~$H_n^2$. However,
$G_n^0$ is a negative instance, a contradiction.
\end{proof}

On the other hand, in the \CCongest model the two notions of
kernelization and fixed-parameter tractability are
equivalent.

\begin{lemma}
If $\PPP\in$ \CCongest-\textsf{FPT},
then $\PPP$ admits a \CCongest kernelization.
\end{lemma}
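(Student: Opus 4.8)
The plan is to exploit the fact that in the congested clique fixed-parameter tractability already lets us \emph{decide} the instance outright, so that a kernelization only needs to emit a canonical constant-size witness. Since $\PPP\in\CCongest$-$\FPT$, there is a computable function $f$ and a \CCongest algorithm $\mathcal{A}$ that on input $(G,k)$ decides whether $(G,k)\in\PPP$ in $f(k)$ rounds, where by definition the overall decision is the disjunction of the individual node outputs. First I would simply run $\mathcal{A}$ for $f(k)$ rounds; afterwards some node has accepted if and only if $(G,k)\in\PPP$.

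The next step is to make this single bit globally known and to agree on where to place the kernel. Both tasks are cheap in the congested clique: in one round every node sends to all others the pair consisting of its acceptance bit and its identifier, so that after this round every node has received all such pairs and can locally compute the disjunction $b\in\{0,1\}$ of all acceptance bits as well as the node $v^\star$ of smallest identifier. Each message has size $\Oof(\log n)$, so this costs only $\Oof(1)$ additional rounds.

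The kernel itself is then canonical. I would fix in advance one instance $(G_1,k_1)\in\PPP$ and one instance $(G_0,k_0)\notin\PPP$, both of constant size; if $\PPP$ or its complement within $\GGG_{s,t}\times\N$ is empty then the corresponding branch below is never taken, so the instance actually needed always exists. The output $(G',k')$ is set to $(G_1,k_1)$ if $b=1$ and to $(G_0,k_0)$ if $b=0$. I would represent $G'$ entirely inside the leader $v^\star$: the map $\nu\colon V(G')\to V(G)$ sends every vertex of $G'$ to $v^\star$, and $\eta$ sends every edge of $G'$ to the length-$0$ path $v^\star,v^\star$, exactly as in the proof of \cref{ex:mis-weft}. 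Since $G'$ has constant order, $v^\star$ can store all of $V(G')$ together with these trivial paths. By construction $(G',k')\in\PPP\Leftrightarrow b=1\Leftrightarrow(G,k)\in\PPP$, so the two instances are equivalent; the order of $G'$ is bounded by the constant $\max(|G_1|,|G_0|)$, which is trivially a computable function of $k$; and the whole computation runs in $f(k)+\Oof(1)$ rounds.

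I expect no substantial obstacle, since the argument is the distributed analogue of the classical observation that an \FPT problem admits the trivial kernel obtained by solving it and emitting a canonical witness. The one point that genuinely uses the model is the global aggregation of the answer: the all-to-all communication of the congested clique lets every node learn the disjunction of the outputs and elect a leader in a constant number of rounds. This is precisely what is unavailable in the \Local and \Congest models, which is why the equivalence of kernelization and fixed-parameter tractability holds here but fails there, as witnessed by the problem $D_d$.
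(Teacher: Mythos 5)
Your proposal is correct and follows essentially the same route as the paper's proof: run the deciding algorithm, use the all-to-all communication of the congested clique to broadcast the answer and elect the minimum-id node, and place a hardcoded constant-size equivalent instance there. You merely spell out a few details the paper leaves implicit (the explicit $\nu,\eta$ representation and the degenerate case where $\PPP$ or its complement is empty).
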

\begin{proof}
As $\PPP\in$ \CCongest-\textsf{FPT}, we may on an instance
$(G,k)$ the algorithm witnessing this. Now in the \CCongest
model, all nodes can broadcast their answer so that in the
next round all nodes know whether $(G,k)$ is a positive or a
negative answer. Now the kernelization algorithm can map a
hardcoded equivalent instance of constant size to the node with
minimum id.
\end{proof}

We give a second definition of a \emph{fully polynomial}
\Distributed kernelization algorithm,
which better reflects the intuition that kernelization
should express efficient preprocessing.

\begin{definition}
A fully polynomial \Distributed kernelization algorithm is a
$\mathsf{DISTRI}$-$\mathsf{BUTED}$
kernelization algorithm where additionally
we restrict the computational power of each node to time
polynomial in the input size.
\end{definition}

A fully polynomial \Distributed kernelization algorithm
can be simulated by a sequential algorithm in polynomial time.
Hence, we obtain that the class of problems that admits a fully polynomial \Distributed kernelization algorithm is a subset
of sequential \textsf{FPT}. It is an
interesting question which problems in \textsf{FPT}
actually admit a fully polynomial \Distributed
kernelization algorithm. As we intend to make a conceptual
rather than a technical contribution, we leave this
investigation for future work.

\section{XPL and model-checking on bounded expansion classes}\label{sec:mc-be}

Finally, we want to introduce a distributed analogue of the
parameterized  complexity class~$\mathsf{XP}$ of
\emph{slicewise polynomial} problems. This class
contains all problems that can be solved in time~$n^{g(k)}$ for some computable function $g$. This definition
obviously has to be adapted to make sense in the distributed setting,
as every problem can be solved in a polynomial number of
rounds (polynomial in the graph size) in the
\Congest~model. We define the following class $\Distributed$-$\mathsf{XPL}$, where
$\mathsf{XPL}$ stands for \emph{slicewise poly-logarithmic}.

\begin{definition}
The class $\Distributed$-$\mathsf{XPL}$ is the class of
problems that can be solved by a
\Distributed algorithm in $f(k)\cdot (\log n)^{g(k)}$ rounds
for computable functions $f$ and~$g$.
\end{definition}

The first-order model-checking problem belongs to the
sequential class \textsf{XP}. We can simply instantiate
the quantifiers of a formula $\phi$ in all possible ways and thereby
evaluate in time~$n^{\Oof(|\phi|)}$ whether $\phi$ is true in
the input graph $G$.
Since the question whether a graph contains
two blue nodes (a simple first-order property)
cannot be decided by a \Local algorithm in
a sublinear (in the diameter) number of rounds in general,
the problem does not lie in \Local-\textsf{XPL}.
We can also say that it is unlikely for the model-checking problem to belong
in \CCongest-\textsf{XPL},
as finding triangle in a poly-logarithmic number of rounds would wildly improved
the best known algorithm of $O(n^{2/3}(\log n)^{2/3})$~\cite{DBLP:conf/podc/IzumiG17}.

We therefore turn our attention to solve
the problem on restricted graph classes.
Two prominent
graph classes on which first-order model-checking is even
fixed-parameter tractable by sequential algorithms are
classes of bounded expansion~\cite{dvovrak2013testing}
and nowhere dense classes of graphs~\cite{grohe2017deciding}.

Very briefly, a graph~$H$ is a {\em{depth-$r$ minor}} of a
graph~$G$ if~$H$ can be obtained from a subgraph of~$G$
by contracting mutually disjoint connected subgraphs of radius
at most $r$. A class of graphs~$\CCC$ has {\em{bounded expansion}}
if there is a function $f\colon \N\to \N$ such that for every $r\in \N$,
in every depth-$r$ minor of a graph from $\CCC$ the ratio between
the number of edges and the number of vertices is bounded by
$f(r)$. More generally, $\CCC$ is {\em{nowhere dense}} if there is
a function $t\colon \N\to \N$ such that no graph from $\CCC$
admits the clique $K_{t(r)}$ as a depth-$r$ minor.
Every class of bounded expansion is nowhere dense, but the
converse does not necessarily hold~\cite{Sparsity}.
Class~$\CCC$ has {\em{effectively bounded expansion}}, respectively
is {\em{effectively nowhere dense}}, if the respective
function~$f$ or $t$ as above is computable.
Many classes of sparse graphs studied in the literature have
(effectively) bounded expansion, including planar graphs,
graphs of bounded maximum degree, graphs of bounded treewidth,
and more generally, graphs excluding a fixed (topological) minor.
A notable negative example is that classes with bounded {\em{degeneracy}}, equivalently with bounded {\em{arboricity}},
do not necessarily have bounded expansion, as there we have only a finite bound on the edge density in subgraphs (aka depth-$0$ minors).
We refer to the textbook~\cite{Sparsity} for extensive background
on the theory of bounded expansion and nowhere dense graph classes.

The methods used to establish
fixed-parameter tractability of the model-checking problem
on these classes do not yield distributed fixed-parameter
tractability. However, the model-checking result on
bounded expansion classes
has been reproved multiple times~\cite{GajarskyKNMPST18,grohe2011methods,
kazana2013enumeration,pilipczuk2018parameterized} with different methods.
We show how to combine these methods with methods
for distributed computing
from~\cite{nevsetvril2016distributed} and prove that
first-order model-checking on bounded expansion classes
lies in the class \CCongest-\textsf{XPL}.

\begin{theorem}\label{thm:mc}
Let $\CCC$ be a graph class of effectively bounded expansion.
Then there exists a computable function $f$ and a \CCongest algorithm
that given a vertex and edge colored graph $G\in\CCC$
and a first-order sentence $\phi$ decides in $f(|\phi|)\cdot
\log n$ rounds whether $\phi$ holds in $G$.
\end{theorem}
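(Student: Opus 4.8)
The plan is to combine the quantifier-elimination approach to first-order model-checking on bounded expansion classes with distributed routines for computing the underlying structural decompositions. First I would recall the reduction underlying the sequential fixed-parameter algorithm, as reproved in \cite{GajarskyKNMPST18,grohe2011methods,kazana2013enumeration,pilipczuk2018parameterized}: the model-checking problem is solved by eliminating quantifiers from the inside out. The crucial combinatorial tool is that graphs from a class $\CCC$ of effectively bounded expansion admit low-indegree orientations and, more generally, transitive--fraternal augmentations (equivalently, low tree-depth / $p$-centered colorings using a number of colors bounded by a function of $p$). These augmentations again have bounded expansion, and they allow one to replace a quantified subformula $\exists y\,\psi(\bar x, y)$ by a new relation whose truth, for each tuple $\bar a$, depends only on a bounded-radius neighborhood of $\bar a$ in the augmented structure, together with a small amount of global ``does such a $y$ exist anywhere'' information.

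Concretely, the elimination of an innermost existential quantifier splits into two regimes. Either a witness $y$ lies within bounded distance of one of the free elements $\bar a$, which is a bounded-radius local check decidable once every node has learned a ball whose radius is bounded in terms of $|\phi|$; or $y$ is far from all of $\bar a$, in which case every atom linking $y$ to the $a_i$ collapses to a fixed truth value and the subformula becomes a property of $y$ alone, so only the global count of vertices satisfying it is needed. After each such step the quantifier is replaced by new unary and binary predicates, we re-augment, and iterate. The number of augmentation phases is bounded by a function of the quantifier rank, hence of $|\phi|$, which yields the factor $f(|\phi|)$ in the running time.

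The distributed core is to realize each augmentation phase in the \CCongest model. For this I would invoke the distributed algorithm of \cite{nevsetvril2016distributed}, which computes low-indegree orientations and the associated augmentations / low tree-depth colorings of a graph from $\CCC$ in $\Oof(\log n)$ rounds, with the hidden constants and the number of colors controlled by the effectively bounded expansion function and by the depth parameter (itself a function of $|\phi|$). Given such a decomposition, the local checks require each node to gather the isomorphism type of its bounded-radius ball; because the augmented graph is again of bounded expansion and therefore has a bounded-indegree orientation, the relevant data fits into $\Oof(\log n)$-bit messages and can be routed and collected in a number of rounds depending only on $|\phi|$. The global-existence queries are answered by aggregation across the clique, which in the \CCongest model is cheap. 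The $\log n$ factor in the final bound stems from the distributed computation of the augmentation itself; composing the $\Oof(|\phi|)$ phases then gives the claimed bound of $f(|\phi|)\cdot \log n$ rounds.

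The main obstacle is the bandwidth bookkeeping in the \CCongest model: a priori, describing the augmented relations or transmitting the gathered neighborhood types could exceed the $\Oof(\log n)$-bit budget per message. The point that makes everything go through is precisely that bounded expansion is \emph{preserved} under the augmentation operations, so at every phase the current graph lies in a related bounded expansion class and hence admits an orientation of in-degree bounded purely in terms of $|\phi|$. This $n$-independent bound caps the number of augmentation edges incident to a vertex, and thereby the size of both the messages and the neighborhood types that must be exchanged; it is what reduces each phase to $f(|\phi|)$ local rounds together with one invocation of the $\Oof(\log n)$-round decomposition routine and cheap global aggregation. Verifying that this bound survives all $\Oof(|\phi|)$ rounds of augmentation, and that the types to be transmitted can always be encoded in $\Oof(\log n)$ bits, is the technically delicate part of the argument.
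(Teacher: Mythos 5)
Your plan is sound and reaches the right bound, but it follows a genuinely different route from the paper. The paper instantiates the quantifier-elimination scheme of~\cite{pilipczuk2018parameterized}: it computes a $(p+1)$-centered coloring with the \Congest algorithm of~\cite{nevsetvril2016distributed} (\Cref{lem:ltd}, the sole source of the $\log n$ factor), then for every small tuple of color classes builds an elimination forest of bounded depth in the \CCongest model (\Cref{lem:cc-td}), and finally eliminates quantifiers over bounded-depth labeled forests by a bottom-up pass that counts descendant types up to a threshold (\Cref{lem:qe-trees}); the clique edges are needed both to assemble the elimination forests and to let disjoint subtrees communicate in the base case. You instead take the Dvo\v{r}\'ak--Kr\'al--Thomas / Kazana--Segoufin route~\cite{dvovrak2013testing,kazana2013enumeration} via low in-degree orientations and transitive--fraternal augmentations, with a near/far (local witness versus global counting) case analysis for each innermost existential quantifier. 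Both are established sequential proofs of the same theorem and both lean on the same distributed primitive from~\cite{nevsetvril2016distributed}, so your architecture is viable; what the paper's choice buys is that all distributed work is confined to two clean subroutines (the coloring, and a bottom-up pass over forests of depth at most $p$), whereas your route must carry the augmented functional structure and its bandwidth analysis through all $\Oof(|\phi|)$ augmentation phases.

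One point you should repair before this becomes a proof: a bounded-radius ball in a bounded expansion graph is \emph{not} of bounded size (consider the centre of a star, whose class has bounded expansion), so ``gathering the isomorphism type of its bounded-radius ball'' is not an $\Oof(\log n)$-bit operation, and bounded \emph{in-degree} does not cap the number of augmentation edges incident to a vertex, since out-degrees can be linear in $n$. The object that is actually bounded, and that the functional quantifier elimination consumes, is the closure of a tuple under the boundedly many bounded-in-degree orientation functions; each vertex can learn its own closure by pulling $\Oof(\log n)$-bit identifiers backwards along arcs for a number of steps depending only on $|\phi|$. Your bandwidth argument should be phrased around that closure rather than around balls or incident edges; once it is, the rest of your sketch goes through.
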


\Cref{thm:mc} states that the first-order model-checking problem on
classes of effectively bounded expansion belongs to
\CCongest-\textsf{XPL}.
Our proof of the theorem follows closely the lines of the proof
given in \cite{pilipczuk2018parameterized} and
we point out only where the proof has to be changed. The idea
of the proof is as follows. We first compute a so-called
\emph{low treedepth coloring} of the input graph,
and then use this
coloring to apply a quantifier elimination procedure
for first-order logic.
It is known that such colorings exist for graphs from classes of bounded
expansion~\cite{NesetrilM08} and furthermore that they
can be computed efficiently even in the \Congest
model~\cite{nevsetvril2016distributed}. For establishing
\Cref{thm:mc} it remains to revisit the quantifier elimination
procedure and show that it can be implemented in the
\CCongest model.
Let us now introduce the relevant definitions.

\begin{definition}
A \emph{rooted forest} is an acyclic graph $F$ together with
a unary predicate $R\subseteq V(F)$ selecting one root in each
connected component of $F$. A tree is a
connected forest.  The {\em{depth}} of a node $x$ in a rooted forest
$F$ is the distance between~$x$ and the root in the connected
component of $x$ in $F$. The depth of a forest is the largest depth of
any of its nodes.  The {\em{least common ancestor}} of nodes $x$ and
$y$ in a rooted tree is the common ancestor of $x$ and $y$ that has
the largest depth.
\end{definition}

\begin{definition}
An \emph{elimination forest} of a graph $G$ is a rooted forest $F$ on
the same vertex set as $G$ such that whenever $uv$ is an edge
in $G$, then either $u$ is an ancestor of $v$, or~$v$ is an
ancestor of $u$ in $F$. The {\em{treedepth}} of a graph $G$
is the smallest possible depth of a separation forest of $G$.
\end{definition}

For the sake of quantifier elimination it will be convenient to
encode rooted forests by a unary function
$\mathsf{parent}\colon V(F)\rightarrow V(F)$.
The function encodes a tree in the
expected way, every vertex is mapped to its parent in the
tree, while the root vertex is mapped to itself. In the following
we assume that trees are encoded via the $\mathsf{parent}$
function.

\begin{definition}
For an integer~$p$, a coloring $\lambda\colon V(G)\to
\{1,\ldots,M\}$ of a graph $G$ is
a {\em{$p$-treedepth coloring}} of $G$
if every $i$-tuple of color classes in $\lambda$, $i\leq p$, induces
in $G$ a graph of treedepth at most $i$.
\end{definition}

\begin{lemma}[\cite{NesetrilM08}]
A class $\CCC$ of graphs has bounded expansion
if and only if for every $p$ there is a number $M$ such that every
graph $G\in\CCC$ admits a $p$-treedepth coloring using $M$ colors.
\end{lemma}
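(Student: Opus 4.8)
\emph{The plan} is to prove the two implications separately; the substantial direction is that bounded expansion forces $p$-treedepth colorings with boundedly many colors, while the converse is a routine density estimate. Throughout I write $\mathrm{td}(H)$ for the treedepth of a graph $H$ and repeatedly use the elementary fact that a graph of treedepth at most $t$ has at most $t$ edges per vertex: it has an elimination forest of depth at most $t$, in which every vertex has at most $t$ ancestors and, by definition of an elimination forest, every edge joins a vertex to one of its ancestors.

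\emph{The easy direction} (colorings $\Rightarrow$ bounded expansion) is a routine density estimate, so I would only sketch it. Fix $r\in\N$ and choose $p=p(r)$ large enough; the hypothesis supplies an $M$ and, for each $G\in\CCC$, a $p$-treedepth coloring with $M$ colors. Given a depth-$r$ minor $H$ of $G$, I would realise each edge of $H$ by a connecting $G$-edge together with two BFS-trees of radius at most $r$ inside the incident branch sets, which exhibits $H$ as a topological minor of $G$ through paths of length at most $2r+1$. Every such path meets at most $2r+2\le p$ color classes, and the union of any at most $p$ color classes induces a subgraph of $G$ of treedepth at most $p$, hence of density at most $p$ by the fact above. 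Charging the edges of $H$ to the at most $\binom{M}{p}$ unions of color classes that realise them then bounds $|E(H)|/|V(H)|$ by a function of $r$ alone, which is precisely bounded expansion.

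\emph{The hard direction} (bounded expansion $\Rightarrow$ colorings) carries the real content, and here I would use the transitive--fraternal augmentation technique underlying \cite{NesetrilM08}. Since bounded expansion implies bounded degeneracy, $G$ admits an acyclic orientation of bounded in-degree; starting from it I would repeatedly augment, adding a transitive arc $u\to w$ whenever $u\to v\to w$ and a fraternal arc between $u$ and $v$ whenever they share an out-neighbour. After $h=h(p)$ rounds this produces a supergraph $G^{+}\supseteq G$. Every arc of $G^{+}$ certifies a path of length at most $2^{h}$ in $G$, so $G^{+}$ is a shallow topological minor of $G$ of bounded depth; bounded expansion then bounds the edge density, hence the degeneracy, hence the chromatic number of $G^{+}$ by some $M=M(p)$. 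Taking any proper $M$-coloring of $G^{+}$ and restricting it to $G$, I would argue that it is a $p$-treedepth coloring: if some $j\le p$ color classes induced a subgraph of treedepth exceeding $j$, that subgraph would contain a path long enough that the transitive and fraternal arcs accumulated over the $h(p)$ rounds force two equally-colored vertices of the path to be adjacent in $G^{+}$, contradicting properness.

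\emph{The main obstacle} is to make this last paragraph quantitative in two respects. First, bounding the density $\nabla_0(G^{+})$ requires the estimate that the density of the augmented graph is controlled by the values $\nabla_{s}(G)$ for $s$ up to roughly $2^{h(p)}$; this is the point where one genuinely invokes bounded expansion at \emph{all} shallow-minor depths and not merely at depth $0$, and controlling how the in-degree bound grows through the augmentation rounds is the technical heart of the argument. Second, one must pin down the relation between the number $h(p)$ of augmentation rounds and the target treedepth $p$; I would show, by induction on $p$, that $h(p)$ may be taken linear in $p$, so that $h(p)$ and the resulting number of colors $M(p)$ are computable functions of $p$, yielding exactly the bounded $p$-treedepth colorings that the statement demands.
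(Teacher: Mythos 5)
The paper does not prove this lemma at all: it is imported verbatim from~\cite{NesetrilM08}, so there is no in-paper argument to compare against. Your sketch reproduces the standard proof from that reference -- a pigeonhole/density estimate for the direction from colorings to bounded expansion, and transitive--fraternal augmentations for the converse -- and the two quantitative issues you flag at the end (controlling the density of the augmented graph via the grads $\nabla_s(G)$ for $s$ up to roughly $2^{h(p)}$, and pinning down $h(p)$ as a function of $p$) are indeed exactly where the work in~\cite{NesetrilM08} lies. Two steps are looser than they should be, and both involve the same slip. First, in the easy direction the paths realising the edges of a depth-$r$ minor $H$ share their BFS-tree portions inside a common branch set, so they are not internally disjoint and $H$ is \emph{not} literally a topological minor of $G$; the correct step is to pick, by pigeonhole, a colour set $I$ with $|I|\le p$ capturing at least $|E(H)|/\binom{M}{p}$ edges of $H$, contract the (connected) unions of the realising paths inside each branch set within $G[I]$, and use minor-monotonicity of treedepth to bound the density of the resulting minor of $G[I]$ by $p$. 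The same objection applies to ``$G^{+}$ is a shallow topological minor of $G$''; the density of the augmentation is instead bounded by the direct counting argument you correctly describe in your final paragraph. Second, in the hard direction, ``a long path forces two equally coloured vertices to be adjacent in $G^{+}$'' has the right flavour (treedepth exceeding $j$ does force a path on more than $j$ vertices, since a DFS tree witnesses treedepth at most the order of a longest path), but the argument of~\cite{NesetrilM08} is really an induction on the number $j\le p$ of colour classes: the accumulated transitive and fraternal arcs guarantee that every connected subgraph on $j$ classes contains a vertex adjacent in $G^{+}$ to all vertices of its own colour in that component, which can therefore be taken as a root and deleted to reduce to $j-1$ classes, yielding treedepth at most $j$ directly. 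Neither issue invalidates the plan; both are repaired exactly along the lines of the cited reference.
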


In fact, we must work with a related notion, as for our
application we need to be able to compute the elimination
forests $F_I$ witnessing that an $i$-tuple $I$ of color classes
has treedepth at most $i$. While in the sequential setting we
can simply perform a depth-first search to compute an approximation
of such an elimination forest, it is unclear how to compute
such forests in \CCongest-\textsf{XPL}.

\begin{definition}
For an integer $p$, a $(p+1)$-centered coloring of a graph
$G$ is a coloring $\lambda:V(G)\rightarrow \{1,\ldots, M\}$
so that for any induced connected subgraph $H\subseteq G$,
either some color appears exactly once in $H$, or $H$ gets at
least $p+1$ colors.
\end{definition}

Every $(p+1)$-centered coloring is a $p$-treedepth coloring.
More precisely, we have the following lemma.

\begin{lemma}[Lemma 4.5 of~\cite{nevsetvril2006tree}]
Let $G$ be a graph and let $\lambda$ be a $(p+1)$-centered
coloring of $G$. Then any subgraph $H$ of $G$ of treedepth
$i\leq p$ gets at least $i$ colors in $\lambda$.
\end{lemma}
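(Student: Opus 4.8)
The plan is to read the conclusion off an elimination forest that the centered coloring lets me build, using the number of colors as a budget for the depth, and then argue by contraposition. Fix a subgraph $H \subseteq G$ with $\mathrm{td}(H) = i \leq p$, and let $c$ be the number of colors of $\lambda$ that occur on $V(H)$. If $c \geq p+1$, then $c \geq p+1 > p \geq i$ and there is nothing to prove, so I may assume $c \leq p$ and construct an elimination forest of $H$ of small depth.

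The construction is a recursive \emph{peel off a uniquely colored vertex} procedure. Given a connected subgraph $C$ of $H$, I first pass to the induced subgraph $G[V(C)]$: it is connected (it contains $C$) and, sharing the vertex set of $C$, it uses at most $c \leq p < p+1$ colors. Hence the $(p+1)$-centered property cannot take its second alternative and must provide a color occurring exactly once in $G[V(C)]$; since $V(G[V(C)]) = V(C)$, the witnessing vertex $v$ is uniquely colored in $C$ as well. I make $v$ a root, delete it, and recurse on the connected components of $C - v$, running this independently on each connected component of $H$ to obtain a rooted forest $F$ on $V(H)$.

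I would then verify two things. First, $F$ is a genuine elimination forest of $H$: for an edge $xy$ of $H$, if $v \in \{x,y\}$ at the step where $v$ was selected, then $v$ is an ancestor of the other endpoint; otherwise $x,y$ both survive into $C - v$ and, being adjacent, lie in the same component, so the ancestor relation is supplied by the recursion. Second, the depth is bounded: because the deleted color no longer appears in $C - v$, every child component loses at least one color, so a node placed at depth $d$ is the center of a component using at most $c - d$ colors, and nonemptiness forces $c - d \geq 1$; thus the depth of $F$ is at most $c - 1$. Consequently $i = \mathrm{td}(H) \leq c - 1 < c$, so $H$ uses at least $i$ colors, as claimed (in fact at least $i+1$ under the distance-to-root convention fixed here).

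The only genuine obstacle is the mismatch between the scope of the hypothesis and the objects the recursion produces: the centered-coloring guarantee is stated for induced connected subgraphs of $G$, whereas the components of $C - v$ are connected subgraphs of $H$ that need not be induced in $G$. The passage to $G[V(C)]$ is exactly what closes this gap, since moving to the induced subgraph preserves connectivity and the multiset of colors, so a center of $G[V(C)]$ is automatically a center of $C$. Everything else is routine bookkeeping on the recursion depth.
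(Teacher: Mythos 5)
Your proof is correct, and since the paper imports this statement from the cited reference without reproving it, there is nothing to diverge from: your peeling argument (repeatedly extract a uniquely colored vertex to build an elimination forest of depth bounded by the number of colors) is exactly the standard proof of this fact. The one place where care is genuinely needed --- the centered-coloring hypothesis applies only to \emph{induced} connected subgraphs of $G$, while the recursion produces connected subgraphs of the possibly non-induced $H$ --- is correctly resolved by passing to $G[V(C)]$, which preserves both connectivity and the color multiset; your remark about the distance-to-root depth convention (yielding $c\geq i+1$ rather than just $c\geq i$) is also accurate.
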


Furthermore, from a $(p+1)$-centered coloring with $M$ colors
one easily computes a forest~$F$ of height at most $i$, $1\leq i\leq p$,
for each tuple of at most $i$ color classes.

\begin{lemma}\label{lem:cc-td}
Given a graph $G$ and a $(p+1)$-centered coloring
$\lambda:V(G)\rightarrow\{1,\ldots, M\}$, we can compute
in $\Oof(p\cdot 2^p)$ rounds in the \Congest
model for every $i$-tuple $I$ of colors, $1\leq i\leq p$, an
elimination forest $F_I$ of height at most $i$.
\end{lemma}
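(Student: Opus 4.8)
The plan is to prove by induction on $i$ that each $G[I]$ admits an elimination forest of height at most $i$, and then to implement this induction in the \Congest model. The construction is a recursive peeling. Take any connected component $H$ of the subgraph $G[I]$ induced by the color classes in $I$; since $H$ uses at most $i\le p$ colors, the definition of a $(p+1)$-centered coloring forces some color to appear exactly once in $H$, and I make the corresponding vertex $v$ the root of the tree of $H$. Deleting $v$ breaks $H$ into components each using at most $i-1$ colors, on which I recurse, attaching their roots as children of $v$. The elimination-forest property is checked edge by edge: an edge incident to $v$ enters a sub-component all of whose vertices are descendants of $v$, and an edge avoiding $v$ lies inside one sub-component and is handled by the induction hypothesis. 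As each step deletes one of the at most $i$ colors, the recursion has depth at most $i$, so $F_I$ has height at most $i$.

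The key to a fast distributed implementation is that the components arising in this peeling have small diameter. The construction shows that $G[I]$ has treedepth at most $i$, and by the standard fact that a connected graph of treedepth $d$ contains no long simple path, each component of $G[I]$ has diameter less than $2^i\le 2^p$; more generally, the components still active at the $j$-th peeling level use at most $i-j+1$ colors and hence have diameter less than $2^{i-j+1}$. Consequently every node can learn all the information it needs about its current component by flooding for a number of rounds equal to that diameter, which is what keeps the round count under control.

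I would run the peeling in at most $i$ phases. In phase $j$, within each active component every node floods, for each color $c$ still present, the minimum identifier of a color-$c$ node together with one bit recording whether two distinct color-$c$ identifiers have been seen. After a number of rounds equal to the component diameter, every node knows for each color whether it occurs exactly once and, if so, where; a canonical such color is chosen and its unique node $v$ is peeled, becoming a vertex of $F_I$ whose parent is the node peeled in the previous phase of the same super-component (and a root when $j=1$). Forwarding at most $p$ identifiers per flooding round costs a factor $p$ in bandwidth, so phase $j$ takes $\Oof(p\cdot 2^{i-j+1})$ rounds; summing this geometric series over $j$ yields $\Oof(p\cdot 2^{i})=\Oof(p\cdot 2^{p})$ rounds for one tuple.

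The step I expect to be the main obstacle is achieving the bound $\Oof(p\cdot 2^p)$ for all tuples at once, since the number of $i$-tuples can be of order $2^M$ with $M$ much larger than $p$, so a tuple-by-tuple execution is hopeless. The plan is to run the floodings for all tuples in parallel over the single network $G$, exploiting that an edge $\{u,w\}$ only takes part in the computation for tuples containing both $\lambda(u)$ and $\lambda(w)$, and to keep the \Congest bandwidth bounded using the sparsity of classes of bounded expansion --- the bounded density of shallow minors that already underlies the distributed low-treedepth coloring algorithm of~\cite{nevsetvril2016distributed}. Reusing that machinery should let the shared floodings be pipelined without a blow-up depending on $M$; this is precisely where the argument must be merged with the distributed techniques of~\cite{nevsetvril2016distributed}, and is the part requiring the most care.
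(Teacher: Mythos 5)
Your per-tuple construction is exactly the paper's: in each connected component of $G[I]$ the $(p+1)$-centered coloring guarantees a color appearing exactly once (since at most $i\le p$ colors are present), you peel that vertex off as a root, recurse on the remaining components, and use the folklore fact that a graph of treedepth $i$ contains no path longer than $2^i-2$ to bound the diameter of every component encountered, so that each of the $i$ peeling levels is implementable by BFS/flooding in $\Oof(2^i)$ rounds. That part is correct and matches the paper's proof step for step.

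Where you diverge is the final paragraph, and that is where the gap lies. The paper does not attempt to run all tuples in parallel: it iterates over the $i$-tuples sequentially and explicitly records the resulting multiplicative factor $\Oof(M^p)$ in the round count. This is harmless for the intended application because the coloring produced by \Cref{lem:ltd} uses a number of colors $M$ bounded by a function of $p$ and the graph class, so the total remains a computable function of $p$ (admittedly not literally the $\Oof(p\cdot 2^p)$ of the statement --- that discrepancy is present in the paper itself). Your proposed parallel pipelining, by contrast, is not substantiated, and the appeal to bounded-expansion sparsity does not address the actual obstruction: the problem is not the density of $G$ but the fact that a single edge $\{u,w\}$ must participate in the flooding for every tuple containing both $\lambda(u)$ and $\lambda(w)$, of which there are on the order of $M^{i-2}$, and the shallow-minor density bounds underlying~\cite{nevsetvril2016distributed} do nothing to reduce that count. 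So the one step you flag as ``requiring the most care'' is both unproven and unnecessary; replace it with the sequential iteration over tuples (accepting the $M$-dependent factor, with $M$ bounded in terms of $p$) and your argument closes.
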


\begin{proof}
We iterate through all $i$-tuples of color classes, $1\leq i\leq p$,
which leads to the factor $\Oof(M^p)$ in the above estimation on
the number of rounds in the algorithm. For each $i$-tuple~$I$ of colors, we can then compute an elimination forest
$F_I$ of height at most $i$ as follows. It is folklore
(see e.g.~Section 6.2 in~\cite{Sparsity}) that the longest path
in a graph of treedepth $i$ has length (number of edges) at
most $2^i-2$. We can hence compute the components of
$G[I]$ (the subgraph induced by the colors in $I$)
in $\Oof(2^i)$ rounds, by performing a breadth-first
search from every vertex, and whenever the searches from
two vertices meet, we continue only the search of the vertex
with the smaller id to avoid large congestion. Now each
component $C$ of $G[I]$ is connected and gets at most
$p$ colors, hence there is a vertex of unique color. We can
find such a vertex $v$ in $\Oof(2^i)$ rounds by traversing the
constructed bfs tree and keeping track of the encountered
colors. We now make $v$ the root of $F_I$ and recursively
continue to construct $F_I$ by decomposing the components
of $G[I]-\{v\}$ (which has one less color) as above.
After $i$ recursive steps, the procedure stops and
produces an elimination forest $F_I$ of depth at most $i$.
Observe that this construction is only possible in the
\CCongest model.
\end{proof}

We now appeal to the result of Ne\v{s}et\v{r}il and
Ossona de Mendez~\cite{nevsetvril2016distributed} that
$(p+1)$-centered colorings are computable in
\Congest-\textsf{XPL}.

\begin{lemma}[\cite{nevsetvril2016distributed}]\label{lem:ltd}
Let $\CCC$ be a class of graphs of effectively
bounded expansion. There exists a computable function $g$
and a \Congest algorithm that on input $G\in\CCC$ and
$p\in\N$ computes a $(p+1)$-centered coloring of $G$ with
$\Oof(1)$ colors in $g(p)\cdot \log n$ rounds.
\end{lemma}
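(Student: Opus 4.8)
The plan is to prove this via the \emph{transitive--fraternal augmentation} technique of Ne\v{s}et\v{r}il and Ossona de Mendez, which is the standard route to low-treedepth and centered colorings on bounded expansion classes and which, crucially, is assembled from local operations. The starting point is that a class of effectively bounded expansion has, for each depth $r$, a computable bound on the depth-$r$ grad $\rdens{r}(G)$ of every $G\in\CCC$; in particular every such $G$ has bounded degeneracy and therefore admits an acyclic orientation of in-degree $\Oof(1)$. I would first recall that there is a computable function $h$ such that performing $h(p)$ rounds of transitive--fraternal augmentation from such an orientation yields a directed graph $\vec{G}^{(h(p))}$ whose symmetric closure again has in-degree bounded by a computable function of $p$, and such that a proper coloring of this symmetric closure is a $(p+1)$-centered coloring of $G$ using $\Oof(1)$ colors. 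This reduces the statement to implementing the augmentation steps and a final proper coloring in the \Congest model within $g(p)\cdot\log n$ rounds.

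Next I would describe each ingredient as a distributed subroutine. The initial $\Oof(1)$-in-degree acyclic orientation of a bounded-degeneracy graph can be computed in $\Oof(\log n)$ rounds by a classical distributed low-out-degree orientation procedure. A single augmentation step is then local in the current orientation: a \emph{transitive} arc $u\to w$ is added whenever $u\to v\to w$, and a \emph{fraternal} edge $uw$ is added and oriented whenever $u\to v\leftarrow w$ with $u,w$ non-adjacent. Because the in-degree is $\Oof(1)$ at every stage, each vertex need only inspect an $\Oof(1)$-sized set of in-neighbours (and their in-neighbours), so each step communicates $\Oof(1)$ messages per edge; the only non-local work is re-orienting the newly added fraternal edges to restore a bounded-in-degree acyclic orientation, which is again an instance of the low-out-degree orientation problem and costs $\Oof(\log n)$ rounds. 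Since $h(p)=\Oof(1)$ for fixed $p$, the total round count is $g(p)\cdot\log n$, and congestion stays bounded because every quantity involved (in-degrees, number of colors) is bounded by a computable function of $p$ alone.

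The final step is to properly color the symmetric closure of $\vec{G}^{(h(p))}$. As this graph has in-degree bounded by a function of $p$, hence bounded degeneracy, it can be properly colored with $\Oof(1)$ colors in $\Oof(\log n)$ rounds by distributed symmetry-breaking on bounded-degeneracy graphs. Combining the initial orientation, the $h(p)$ augmentation steps, and this coloring, and absorbing all $p$-dependent constants into $g$, yields the claimed $g(p)\cdot\log n$-round \Congest algorithm; correctness of the centered-coloring output is inherited from the sequential theory.

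The main obstacle I expect is the re-orientation inside each augmentation step: one must argue that after adding the fraternal edges the graph still admits a bounded-in-degree acyclic orientation --- this is exactly where the bounded-expansion hypothesis, via the $\rdens{r}(G)$ bounds, is used to control the density of the augmented graph --- and that such an orientation can be recomputed in $\Oof(\log n)$ rounds without the in-degree, and hence the per-edge congestion, compounding across the $h(p)$ iterations. Ensuring that this bound remains a function of $p$ only and independent of $n$ is the technical heart of the argument; the round bounds then follow from the $\Oof(\log n)$-round orientation and coloring primitives.
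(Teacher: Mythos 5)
The paper offers no proof of this lemma: it is imported verbatim from Ne\v{s}et\v{r}il and Ossona de Mendez~\cite{nevsetvril2016distributed}, which is precisely the source you are reconstructing. Your sketch --- transitive--fraternal augmentations started from a distributed $\Oof(1)$-out-degree acyclic orientation, with an $\Oof(\log n)$-round re-orientation after each of the $h(p)$ augmentation steps and a final distributed proper coloring of the bounded-degeneracy augmented graph --- is essentially the argument of that reference, so it takes the same approach as the (cited) proof the paper relies on.
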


We now come to the quantifier elimination procedure
on classes of bounded expansion.
The proof boils down to proving how to eliminate
a single existential quantifier for bounded depth forests.
This elimination is then lifted to bounded expansion classes
via low-treedepth colorings. The following statement is
an adapted version of Lemma 26
of~\cite{pilipczuk2018parameterized}, which is the crucial
ingredient of the proof.

\begin{lemma}[Lemma 26
of~\cite{pilipczuk2018parameterized} (adapted)]\label{lem:qe-trees}
Let $d\in \N$ and $\Lambda$ be a label set.
Then for every formula $\varphi(\bar x)\in \FO[\{\parent\}
\cup \Lambda]$ with $|\bar x|\geq 1$ and of the form
$\varphi(\bar x)=\exists y\, \psi(\bar x,y)$
where~$\psi$ is quantifier-free, and every $\Lambda$-labeled
forest $F$ of depth at most $d$, there exists a label set
$\wLambda$, a quantifier-free formula $\wphi(\bar x)\in
\FO[\{\parent\}\cup \wLambda]$, and a $\wLambda$-relabeling
$\wh{F}$ of~$F$ such that $\varphi$ on $F$ is
equivalent to $\wphi$ on $\wh{F}$.
Moreover, the label set $\wLambda$ is computable from~$d$
and~$\Lambda$, the formula~$\wphi$ is computable from
$\varphi,d,\Lambda$, and the transformation
which computes~$\wh{F}$ given $F$ can be done in
$f(d,|\phi|)$ rounds by a \CCongest algorithm for a
computable function $f$.
\end{lemma}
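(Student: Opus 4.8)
The plan is to follow the sequential quantifier-elimination argument of Lemma~26 of~\cite{pilipczuk2018parameterized} essentially verbatim for the construction of $\wLambda$, $\wphi$, and $\wh F$, and to verify that the one ingredient that changes in the distributed setting---the actual computation of the relabeling $\wh F$---can be carried out in the \CCongest model in $f(d,|\varphi|)$ rounds. The key structural observation is that, since $F$ is encoded through the unary function $\parent$ and has depth at most $d$, every term occurring in $\psi$ is of the form $\parent^i(x_j)$ or $\parent^i(y)$ with $i\leq d$, so the behaviour of a candidate witness $y$ is governed by only finitely much information: the label sequence $(\lambda(y),\lambda(\parent(y)),\ldots,\lambda(\parent^d(y)))$ together with the ancestor and equality relations between the ancestor chain of $y$ and those of the variables in $\bar x$. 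I will call this bounded object the \emph{type} of $y$; the set $\mathcal{T}$ of possible types has cardinality bounded by a function of $d$, $\Lambda$, and $|\varphi|$.

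First I would import from~\cite{pilipczuk2018parameterized} the definition of the relabeling: for each vertex $v$ and each type $\tau\in\mathcal{T}$ that a witness located in the subtree $F_v$ rooted at $v$ could realize, the new label of $v$ records a flag indicating whether $\tau$ is in fact realized somewhere in $F_v$, with the portion of $\tau$ lying strictly above $v$ treated as a shared parameter. These flags, together with $\Lambda$, constitute $\wLambda$. The rewritten formula $\wphi(\bar x)$ is then quantifier-free: $\exists y\,\psi(\bar x,y)$ holds exactly when there is an admissible placement of $y$ relative to $\bar x$---either inside the subtree hanging off one of the bounded-length ancestor chains of the $x_j$, or in a part of the forest disjoint from all of them---together with a type $\tau$ consistent with $\psi$ and realized at the corresponding location; $\wphi$ reaches that location by at most $d$ applications of $\parent$ and reads the relevant flag. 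This is precisely the equivalence established in the sequential lemma, and it is independent of the model of computation.

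The genuinely new step is the \CCongest computation of $\wh F$. Here I would exploit that in the congested clique every vertex communicates directly with its parent $\parent(v)$, even though $\parent(v)$ need not be a neighbour in the original graph---this is exactly the feature that made \Cref{lem:cc-td} available only in the \CCongest model. The flags are computed by two sweeps of the depth-$d$ forest. A top-down pass first propagates the ancestor-label vectors $(\lambda(v),\ldots,\lambda(\parent^d(v)))$ downwards in $d$ rounds, after which every vertex knows the shared upper part of any type realized in $F_v$. A bottom-up pass then aggregates the realized-type sets in $d$ further rounds: in each round a vertex collects one message from each of its children---feasible since distinct children are distinct senders---combines these sets with the types realized at $v$ itself, and forwards the result to $\parent(v)$. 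A global flag for witnesses disjoint from $\bar x$ is obtained by aggregating and broadcasting over the whole clique in $\Oof(1)$ rounds. Each transmitted object is a subset of $\mathcal{T}$ or a bounded label vector, hence of size bounded by a function of $d$, $\Lambda$, and $|\varphi|$; it may therefore be split over at most $f(d,|\varphi|)$ physical rounds to respect the $\Oof(\log n)$ bandwidth. Composing the $2d$ logical passes yields the claimed bound.

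The main obstacle is not the communication pattern---the clique trivialises parent-to-child and child-to-parent messaging---but the bookkeeping required to confirm that the finite type information recorded in $\wLambda$ is exactly sufficient to make $\wphi$ quantifier-free while remaining computable by a single top-down/bottom-up sweep of a depth-$d$ forest. This is the content of Lemma~26 of~\cite{pilipczuk2018parameterized}. The only real subtlety introduced by the distributed setting is the bandwidth: a realized-type set may exceed $\Oof(\log n)$ bits for small $n$, so one logical aggregation step may span several physical rounds; since the number of such rounds is bounded by the $n$-independent quantity $|\mathcal{T}|$, it is absorbed into $f(d,|\varphi|)$, and the round complexity remains as stated.
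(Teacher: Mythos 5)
Your overall strategy coincides with the paper's: import the quantifier-elimination construction of Lemma~26 of the cited work, observe that on a depth-$d$ forest encoded by $\parent$ the relevant information about a candidate witness $y$ is a bounded ``type'', compute the new labels by a bottom-up sweep of the forest (using the clique edges so that a child can reach its parent even when they are not adjacent in the network graph), and handle witnesses lying in parts of the forest disjoint from the ancestor chains of $\bar x$ by a global aggregation over the clique --- which is exactly the paper's remark that in the case $h=0$ one crucially uses communication edges that are not forest edges. The round and bandwidth accounting also matches the paper's sketch.

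The one place where your description is too coarse, and would actually yield an incorrect $\wphi$, is the content of the new labels. You record, for each type $\tau$, only a Boolean flag stating that $\tau$ is realized somewhere in the subtree below $v$. This is not sufficient: $\psi(\bar x,y)$ may contain (in)equality atoms between terms $\parent^i(y)$ and $\parent^j(x_l)$, and the unique vertex below $v$ realizing $\tau$ might be precisely one of the at most $|\bar x|(d+1)$ vertices excluded by such atoms once $\bar x$ is instantiated. Since the relabeling must be computed before $\bar x$ is known, the label has to record the \emph{number} of realizations of each type, capped at a threshold of roughly $|\bar x|(d+1)+1$, so that $\wphi$ can certify the existence of a realization avoiding all excluded vertices. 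This is exactly what the paper's sketch means by ``count the number of types of the descendants of a node up to a certain threshold''. The repair does not affect your communication argument: counts capped at a threshold depending only on $d$ and $|\varphi|$ are still messages of size $f(d,|\varphi|)$, so the stated round bound is preserved.
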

\begin{proof}[sketch]
We can follow the lines of the proof of Lemma 26
of~\cite{pilipczuk2018parameterized} and observe
that the new labels can be computed bottom up
along the tree by a \CCongest algorithm. For this
is suffices to count the number of types of the
descendants of a node up to a certain threshold.
Hence, the amount of information that has to be sent
and stored depends functionally only on $d$ and
$\phi$ and can be sent with low congestion along
the forest edges. In the case $h=0$ in the proof, we
crucially use that vertices from different subtrees of
the forest can communicate via communication
edges that are not edges of the forest.
\end{proof}

The rest of the proof works exactly as the proof
given in~\cite{pilipczuk2018parameterized} by replacing
all subroutines for computing low tree\-depth colorings and
elimination forests by \Cref{lem:cc-td} and \Cref{lem:ltd}.



\section{Conclusion}

In this work we followed the approach of
parameterized complexity to provide a framework
of parameterized distributed complexity.
We could only initiate the study of distributed parameterized
complexity classes and many interesting questions remain open.
On the one hand, the parameterized distributed complexity and distributed
kernelization complexity of many important
graph problems has not yet been studied. On the other hand,
it remains an interesting question to find parameterized
distributed reductions between commonly studied
graph problems.

\addcontentsline{toc}{section}{References}
\bibliographystyle{plain}
\bibliography{13_ref}


\end{document}